%% file: main.tex
\documentclass[12pt]{article}
\usepackage{natbib}
\usepackage{amsmath,amssymb,amsthm,bm}
\usepackage{xcolor}
\usepackage{a4wide}
\usepackage{setspace}
\usepackage{subcaption}
\usepackage{tikz,url}
\usepackage{comment}
\usepackage{multirow}
\usepackage{hyperref}

\newtheorem{theorem}{Theorem}
\newtheorem{lemma}{Lemma}
\newtheorem{coro}{Corollary}

\newcommand{\chref}[2]{\href{#1}{\textcolor{blue}{#2}}}

\newtheorem{assumption}{Assumption}

\newcommand{\argmin}{\operatornamewithlimits{argmin}}

\newcount\blind

\newcommand{\defurl}[3]{%
  \expandafter\newcommand\csname #1\endcsname{%
    \ifnum\blind=1 #3\else #2\fi
  }%
}

\defurl{urlprereg}{https://osf.io/feu5j}{https://osf.io/feu5j/overview?view_only=97d687da56fe42d9a90f216931564225}
\defurl{urlproject}{https://osf.io/72abk}{https://osf.io/72abk/overview?view_only=496a34e9d5e24012b055ff07cf898f6d}

\title{Disentangling Causal Mechanisms in Conjoint Experiments Using Mediation\footnote{
  \ifnum\blind=1
    An anonymized version of our pre-registration plan can be found \chref{\urlprereg}{here}. An anonymized version of our survey instrument can be found \chref{\urlproject}{here} and is discussed in Appendix~\ref{app:survey_instrument}. The pre-registered experiment was approved by the appropriate IRB. Open-source software will be made available to implement this method.
  \else
    The pre-registered experiment in this study was approved by the University of Texas at Austin IRB (STUDY00007312). The pre-registration plan can be found \chref{\urlprereg}{here}. We thank Alex Coppock, Libby Jenke, Connor Jerzak, Tony Linero, Mats Stensrud,  Dustin Tingley, Matthew Tyler and participants at TexMeth 2026, MPSA 2026, and the Texas Junior Methods Workshop for comments on an earlier draft. Aklin gratefully acknowledges the support of the Swiss National Science Foundation (Grant \#10001834, \textit{Managing a Just Transition to Net-Zero} project).
  \fi
}}
\author{
\ifnum\blind=1
    ~
\else
    Michaël Aklin\footnote{College of Management, EPFL.}\hspace{1.1em} 
    Max Goplerud\footnote{Department of Government, University of Texas at Austin}\hspace{1.1em}
    Nicole E. Pashley\footnote{Department of Statistics, Rutgers University.}\hspace{1.1em}
    Jenna Salzman\footnote{Department of Government, University of Texas at Austin.}
\fi
}
\date{\today}

 \newcommand{\ind}{\perp\!\!\!\!\perp} 
\newcommand{\E}{\mathbb{E}}
\newcommand{\I}[1]{\bm{1}\{#1\}}
\newcommand{\bSi}{\bm{S}_i}
\newcommand{\bXi}{\bm{X}_i}
\newcommand{\bx}{\bm{x}}
\newcommand{\bs}{\bm{s}}

\begin{document}

\maketitle
\begin{abstract}

    Conjoint experiments provide an attractive way to assess the role of multiple attributes simultaneously on decision-making. However, the randomization of multiple attributes prevents understanding the causal mechanisms that, critically, depend on the relationship between attributes---e.g., how one attribute affects the respondent's belief as to another attribute. This is because conjoint experiments recover controlled effects whereas a substantively important estimand may be the total or indirect effect of one attribute. Unfortunately, existing experimental designs for conjoint experiments cannot estimate these effects. We provide an alternative framework that requires one additional, simple experiment to learn the relationship between attributes among respondents alongside the standard assumptions for causal mediation. Estimation of the relevant effects can be done in a doubly robust fashion using machine learning methods. We illustrate this by conducting a pre-registered experiment on candidate choice and disentangle the effect of different attributes by understanding their mediation through the candidate's party.


    \vspace{1em}
    \noindent\textbf{Key words}: causal inference, factorial design, conjoint experiment, mediation analysis, double machine learning

    \ifnum\blind=1
        \vspace{1em}\noindent\textbf{Word Count:} 10974
    \else 
        ~
     \fi
     
\end{abstract}
\thispagestyle{empty}

\clearpage
\setcounter{page}{1}
\doublespacing

\section{Introduction}

Conjoint experiments are a popular method in social science, where they are used to randomize multiple (often many) different attributes (treatments) to create realistic profiles and assess respondents' preferences over them. This makes these experiments particularly appealing in fields that care about choices and tradeoffs over multidimensional items, such as marketing \citep{green1990conjoint, bradlow2005current, allenby2019economic} and political science \citep{hainmueller2014causal,rao:14,abramson2022we,bansak2023using}. Some influential and popular applications in political science include the types of immigrants that respondents would prefer to admit to the United States \citep{hainmueller_2015_ajps} and the types of candidates that are preferred in an election \citep{ono2019contingent,schwarz2022have}. In marketing, these experiments are widely used to assess preferences over consumer goods \citep[e.g.,][]{rao:14, netzer2011adaptive, eggers2021choice}.

A critical decision when designing a conjoint experiment is which attributes to include. Consider an experiment in the United States where voters are asked to assess hypothetical candidates for office, and suppose one is especially interested in estimating the effect of race on the probability of selecting a candidate. Researchers often wish to make the decision as realistic as possible. In many jurisdictions, the party of the candidate does not appear on the ballot for municipal elections. Thus, a conjoint experiment that seeks realistically to mimic what those voters actually see should not include candidate party. However, if the experiment finds, for example, an effect of the candidate's race when the candidate's party is not included, the interpretation of that effect and attributing it to the candidate's race is delicate. Even though candidate party is not explicitly provided, it likely still plays a role: If respondents know about a candidate's race, it is reasonable to suspect that they will draw inferences about party and this may inform their decision-making. Thus, it is possible that the effect of race flows entirely through the changes in belief it causes about the candidate's party. Indeed, existing work has found that providing party labels leads to different estimated effects in American and comparative contexts  \citep[e.g.,][]{kirkland_candidate_2018,kuriwaki2025winning}.

Examples of respondents drawing inferences about omitted factors are common across all types of conjoint experiments in political science and other disciplines such as marketing. Studies manipulating race on a CV might cause respondents to draw inference about social class (e.g., \citealt{bertrand2004emily,butler2017empirical}). Manipulating gender might cause inferences about party or ideology \citep{abramson2024gender}; randomizing party might change beliefs about policy positions \citep{orr2020policy}; manipulating the regime of a state (democratic or not) might change beliefs about its racial makeup \citep{rathbun2025separate,tomz2026race}. This issue is closely related to the problem of bundled treatments and information equivalence in survey experiments more generally \citep[e.g.,][]{sher2006information, hainmueller2014causal,butler2017empirical,Dafoe_Zhang_Caughey_2018}.

In many conjoint experiments, however, \emph{any} choice of attributes will enable the researcher---under mild assumptions---to identify and estimate some causal quantity (e.g., the effect of candidate race on vote choice, marginalizing over all other attributes; \citealt{hainmueller2014causal}). Rather, the crux of the problem is one of interpretation as including different factors changes the quantities of interest that can be identified. Imprecision about the causal quantity at hand can lead to starkly different interpretations of the same study (see, e.g., Appendix~\ref{app:additional_illustration} for an illustration in recent published work). Unfortunately, given the standard way that conjoint experiments are designed, it is generally not possible---even with strong assumptions---to recover the quantity identified from an experiment where, say, $T$ and $S$ are included from one where $T$, $S$ and $M$ are included. Thus, researchers face a difficult choice in deciding which factors to include.

In this paper, we argue that the framework of causal mediation can help guide researchers to design studies to elicit causal quantities that are of substantive interest as well as allowing them to recover more subtle quantities of interest---albeit at the expense of making stronger assumptions. The benefit of mediation is to allow for the decomposition of the effect of some treatment $T$ on an outcome $Y$ to be decomposed into ``direct'' and ``indirect'' effects \citep{pear:01,imai_2011_apsr,vanderweele_2014_epidem}. Indirect effects are of particular interest theoretically as they capture the portion of the effect of $T$ on $Y$ that flows through $T$'s effect on a mediator $M$, whereas direct effects measure the portion of the effect of $T$ that exists without manipulating $M$. However, existing conjoint experiments are simply unable to estimate these effects---even given the standard (strong) assumptions required for mediation. The key limitation is that most conjoint experiments do not collect information about how beliefs in some non-randomized factor are affected by the randomized treatments (though see, e.g., \citealt{Dafoe_Zhang_Caughey_2018,kuriwaki2025winning}), perhaps due to the concern that asking respondents about their beliefs about a candidate's party simultaneously to asking their vote choice would affect their responses.

To tackle this, we propose a new experimental design that is a straightforward extension of standard conjoint experiments that allow researchers to perform mediation analysis. In brief, the researcher combines a (standard) conjoint experiment where both the treatment and mediator are randomized with an additional experiment where a (separate) set of respondents reveal their beliefs as to values of the mediator given the randomized profiles---but where they are not asked about the outcome. We show that, under the standard assumptions in mediation \citep[e.g.,][]{imai:keel:yama:10,forastiere_2018_biomet} and conjoint experiments \citep[e.g.,][]{acharya_2018_polan}, one can identify all of the relevant mediation quantities. We also derive a novel sensitivity test to (some) violations of the assumptions that can be implemented if one runs a standard conjoint experiment where the mediator is not included. To help researchers implement this design, we discuss in detail how quantities of interest change when a mediator is included and provide guidance on designing conjoint experiments in the presence of mediators.

To estimate these mediation effects in a credible fashion, one must estimate multiple high-dimensional regressions that control for a rich set of pre-treatment covariates alongside the randomized attributes in the conjoint. We apply ideas from existing work that combines machine learning and causal inference (e.g., \citealt{semenova2021debiased,farbmacher_2022_econmet,ratkovic2023relaxing,kennedy2024semiparametric}) to analyze the output from our experimental design. One can estimate average effects in a robust fashion and obtain interpretable summaries of heterogeneous effects, both with valid standard errors.

To illustrate the insights gained from applying mediation analysis, we conduct a pre-registered replication of \cite{kirkland_candidate_2018}'s study of candidate choice for mayoral elections, where they compare effects from a study where party is not provided against one where party is provided and randomized. We show that the difference between the effects in those experiments does not always correspond to the existence of an indirect effect. On the one hand, the effect of candidate race is strongly mediated through party (i.e., an indirect effect) where Democratic respondents exhibit large positive effects towards non-white candidates and Republican respondents exhibit large negative effects. On the other hand, for political experience, both types of respondents show small indirect effects---but large and positive direct effects---towards candidates with political experience.

\section{Information and Designing Conjoint Experiments}

Before turning to a formal analysis of mediation in conjoint experiments, it is useful to explain more precisely what changes when a mediator---such as candidate party---is included and randomized in a conjoint experiment. Imagine the researcher is primarily interested in the effect of some variable $T$, say, candidate race, on an outcome $Y$, say, candidate choice. One could design a study where only $T$ was included in the study; we refer to this as a $Y(T)$-experiment. With mild assumptions, one can estimate the \emph{total} causal effect of $T$ on the outcome $Y$ \citep{imai_2013_jrss,hainmueller2014causal,acharya_2018_polan}.\footnote{This is often known as the average marginal component effect if there are other attributes that are marginalized over, see Section~\ref{sec:theory}.} 

However, an objection to interpreting this as a theoretically interesting effect of $T$ is that there may be some mediator $M$---candidate party, for example---that is important and primed by $T$. As our empirical example confirms, Black candidates are strongly believed to be Democrats by American survey respondents, and thus we conjecture that party is likely an important mediator for the effect of race. This leads to an important substantive question in a conjoint that only includes candidate race: Do respondents change their vote choice because of a candidate's race ($T$) \emph{directly}? Or, alternatively do respondents, being told about the candidate's race, update their beliefs about the candidate's partisanship ($M$) and $M$ itself is the reason for their decision? A strong and reasonable objection to a $Y(T)$-experiment, then, is that if one provided information about party, the effect of race would disappear. Thus, a $Y(T)$-experiment that omits party ($M$) would ``mis-attribute'' a causal effect to race when it is wholly operative through race's effect on party.

The natural response then is to try to block many of these other channels by specifically providing information about them  (see also \citealt{Dafoe_Zhang_Caughey_2018}'s ``covariate control''); this is often easy to do in conjoint experiments as, by construction, respondents are evaluating profiles that the researcher directly controls. In this stylized example, one would randomize both $T$ and $M$; we refer to this as a $Y(T,M)$-experiment. The effect of $T$ that is obtained is often known as a controlled direct effect if $M$ is fixed at some value or an average marginal component effect (AMCE) if one averages these effects over some distribution of $M$ \citep{hainmueller2014causal, acharya_2018_polan, de2022improving}. If this $Y(T,M)$-experiment showed no effect of $T$, it would vindicate the objection to the $Y(T)$-experiment. 

However, there is an underappreciated cost to including both $T$ and $M$ in the experiment. First, the interpretation of the controlled direct effect and AMCE is subtle (see, e.g., \citealt{abramson2022we,bansak2023using,ganter2023identification}). In terms of understanding causal mechanisms, researchers might err by mistakenly interpreting it as either a total effect, ignoring the pathways that are blocked by fixing $M$ at some value, or as a natural direct effect, ignoring the role of the manipulation of $M$ \citep{pear:01, vanderweele_2014_epidem}.\footnote{The distinction between a controlled direct effect and a natural direct effect is subtle \citep{pear:01}; Section~\ref{sec:theory} and Appendix~\ref{app:example_direct} provide illustrations in terms of an idealized experimental procedure.} Second, if $T$ is the variable of primary \emph{theoretical} interest and $M$ is included mostly to address the objection to a $Y(T)$-experiment, a $Y(T,M)$-study alone will again mis-attribute and underestimate the importance of $T$ by ignoring a possibly important channel ($M$) through which it operates. \cite{FuLi2026} provides interesting evidence for this, as they note that observed treatment effects in conjoint experiments appear to decline as one includes more attributes (Figure 2). Including too many attributes also may undermine the realism of the experiment if one provides too much information compared to what respondents actually have when actually making decisions. 

Unfortunately, it remains unclear how to address this tension when designing conjoint experiments despite good general guidance on other aspects of design (e.g., \citealt{hainmueller2014causal, bansak2018number, de2022improving}). For example,  \citet{rao:14}'s well-cited textbook on conjoint experiments devotes only two pages to which attributes to include and says it is ``as much an art as a science'' (p. 43). Explicitly thinking about the problem using causal mediation can provide guidance.

First, one should think through how information is likely to flow in the real world scenario the conjoint experiment is designed to mimic; in our case of candidate choice in municipal elections, we are guided by the fact that many elections in the United States do not provide party label on the ballot. The candidate party $M$ is a variable that is downstream from more observable quantities (age, gender, etc.) and thus is an important potential mediator. Second, it is important to pin down the correct theoretical quantities of interest. If a mediation quantity (e.g., an indirect effect) is of primary interest, then we provide a framework to estimate it---with the caveat that stronger assumptions that cannot be guaranteed by design are required to do so. Even if the quantity of interest is not one that requires mediation to obtain, our approach can still be of use as a robustness test (e.g., is there evidence of a direct effect even after accounting for $M$; e.g., a penalty for gender after accounting for ideology; \citealt{abramson2024gender}). It can also help thread the needle of having a ``realistic experiment'' (i.e., where many things, including $M$ are randomized) and obtaining a meaningful quantity of interest (e.g., the total effect of $T$ accounting for its effect on $M$).

\subsection{Challenges for Existing Designs}

Even once the causal quantities are pinned down, however, estimating them from conjoint experiments is challenging. The key limitation is that both the $Y(T)$ and $Y(T,M)$-studies are missing key information---how does $T$ affect $M$. This is because conjoint experiments very rarely measure the value of the mediator $M$ given a value of $T$ (e.g., what party did a respondent think a hypothetical candidate was likely to be).\footnote{\cite{kuriwaki2025winning} is a notable exception (see Section~\ref{sec:extensions}). In traditional parallel designs \citep{imai_2013_jrss}, $M$ is explicitly measured in the $Y(T)$-experiment.} Thus, even if one was willing to make standard assumptions in causal mediation that allow for direct and indirect effects to be identified from observational data \citep{imai_2011_apsr,vanderweele_2014_epidem}, the lack of a measured mediator prevents such an analysis.

One approach to sidestep this problem is to compare the total effect from a separate $Y(T)$-experiment against the effect of $T$ estimated from a $Y(T,M)$-experiment \citep{imai_2013_jrss,acharya_2018_polan}. Unfortunately, even with access to both $Y(T)$ and $Y(T,M)$ studies, one must make strong assumptions to recover the indirect effect.\footnote{One approach is to assume away interactions between treatment and mediator for all respondents \citep{imai_2013_jrss}, see Appendices~\ref{app:alt_mediation} and~\ref{app:no_interaction}.} Without such assumptions, the difference in estimated effects between the two studies is a combination of the indirect effect and an interaction effect between the treatment and mediator \citep{vanderweele_2014_epidem,acharya_2018_polan}. In fact, a part of a recent disagreement in interpreting a conjoint experiment turns on the distinction between this eliminated effect and a ``proper'' indirect effect (\citealt{rathbun2025separate,tomz2026race}; see Appendix~\ref{app:additional_illustration} for details). As our empirical illustration (Section~\ref{sec:empirical}), it is highly variable whether the eliminated effect does, in fact, correspond to a meaningful indirect effect.

Our proposed design tackles this problem by explicitly gathering that missing information on how $T$ affects $M$ from survey respondents. As Appendix~\ref{app:survey_details} details, this experiment looks exactly like a standard conjoint, but the respondents are asked to provide their ``best guess'' as to the value of the mediator instead of revealing their choice ($Y$). In conjunction with a $Y(T,M)$-experiment (where the mediator is randomized), standard mediation assumptions are sufficient to identify all indirect, direct, and total effects. If resources permit, we also encourage researchers to conduct the $Y(T)$-experiment as this allows one to report the eliminated effect that is identified under weaker assumptions. As we discuss in Section~\ref{sec:partial_test}, having all three experiments also lets one conduct a  sensitivity analysis to partially test violations of the assumptions.

\section{Causal Mediation in Factorial Experiments}\label{sec:theory}

We consider first a $2^J$ factorial setting where we observe data on $N$ units, i.e., where each unit $i \in \{1,,\dots,N\}$ is assigned $J$ binary factors. One factor, $T_i \in \{0,1\}$, is designated as the (focal) ``treatment,'' another, $M_i \in \{0, 1\}$, as the ``mediator,'' and the rest, $\bSi\in \{0, 1\}^{J-2}$, as ``auxiliary treatments.''\footnote{We focus here on the binary case for exposition; all factors may have multiple levels, as in our empirical example (Section~\ref{sec:empirical}). Section~\ref{sec:multiple_mediator} discusses the case of multiple mediators.}  These labels are based purely on researcher interest; treatment, mediator, and auxiliary treatments are not handled differently in the design of the $Y(T,M)$-experiment and are all randomized. To align with our empirical example, we consider the choice of candidate for mayor: $T_i$ is whether the candidate is white (0) or Black (1); $M_i$ is whether the candidate is a Republican (0) or a Democrat (1); $\bSi$ contains other information such as the age of the candidate, their prior political experience and their gender. We are not interested in mediation through $\bSi$. We also observe pre-treatment covariates $\bXi$ for each individual respondent.

Following standard work on causal inference and mediation (e.g., \citealt{pear:01,vanderweele_2014_epidem}), we define the potential outcome as a function of the treatment, mediator, and auxiliary treatments $Y_i(t,m, \bs)$ and define the observed outcome $Y_i$ as corresponding to the potential outcome associated with the observed $T_i$, $M_i$ and $\bSi$. We also define potential outcomes for the mediator $M_i(t, \bs)$ and similarly connect the observed mediator $M_i$ to the corresponding potential outcome. This makes standard assumptions such as the Stable Unit Treatment Value Assumption (SUTVA) and composition (e.g., \citealt{vanderweele_2014_epidem}) that are formalized in Appendix~\ref{app:SUTVA}. 

It is common for mediation analysis to define the key quantities of interest using an additional potential outcome $Y_i(t, M_i(t', \bs), \bs)$ where the mediator value is determined based on an intervention on the value of $t'$ but holding $t$ at some other value: In our example above, consider $Y_i(1, M_i(0, \bs),\bs)$: Fixing $\bs$, this represents the potential outcome that would obtain for person $i$ if one recorded the party they associate with a white candidate ($M_i(0, \bs)$) and then examined whether they would vote for a Black candidate with this party affiliation, i.e., $Y_i(1, M_i(0,\bs), \bs)$. These are known as ``nested'' counterfactuals and can be somewhat challenging to interpret.\footnote{\cite{vanderweele_2014_epidem} analyzes mediation without nested counterfactuals; Appendix~\ref{app:alt_mediation} provides details.} Thought experiments like the one above make this somewhat easier in the conjoint experiment given that one can directly set $M$ and $T$. More generally, however, if $t \neq t'$, these potential outcomes are generally unobservable in a single experiment as they depend on two different treatment values assigned to the same unit; thus, they are often known as ``a priori'' or ``cross-world'' counterfactuals \citep{forastiere_2018_biomet}. 

Table~\ref{tab:QOI} defines our main quantities of interest using this notation. The fundamental building block is the average nested potential outcome $\alpha(t,t')$ that averages across units and auxiliary treatments $\bs$; it is the direct analogue to the average potential outcome in an experiment with only a single treatment and, when $t = t'$, it exactly corresponds to the marginal mean from a $Y(T)$-experiment \citep{leeper2020measuring}.

\begin{table}[!htbp]
\caption{Mediation Quantities of Interest}\label{tab:QOI}
\vspace{-1em} 
\centering
\[
\begin{array}{l c r@{} c@{} l}
\hline\hline
\text{Quantity} & \text{Abbreviation} & & & \text{Definition} \\
\hline \\[-1em]
\multirow{2}{*}{\text{\shortstack{Average Nested\\Potential Outcome}}}& & 
\multirow{2}{*}{$\alpha(t,t')$} & \multirow{2}{*}{$=$} & \multirow{2}{*}{$\E\left[Y_i(t,M_i(t',\bSi), \bSi)\right]$} \\[0.25em]\\[0.25em]\hline\\[-0.75em]
\text{Total Effect} & \text{AMCE} &
\tau & = & \E\left[Y_i(1,M_i(1, \bSi), \bSi) - Y_i(0, M_i(0, \bSi), \bSi)\right] \\[0.25em]\hline\\[-0.75em]
\multirow{2}{*}{\shortstack{Average Marginal\\Indirect Effect}} & \text{AMIE} & 
\delta(t) & = & \E\left[Y_i(t,M_i(1, \bSi), \bSi) - Y_i(t, M_i(0, \bSi), \bSi)\right] \\[0.25em]
 & \text{MAMIE} & 
\bar{\delta} & = & \frac{1}{2} \left(\delta(0) + \delta(1)\right) \\[0.25em]\hline\\[-0.75em]
\multirow{2}{*}{\shortstack{Average Marginal\\Direct Effect}} & \text{AMDE} & 
\xi(t) & = & \E\left[Y_i(1,M_i(t,\bSi), \bSi) - Y_i(0, M_i(t, \bSi), \bSi)\right] \\[0.25em]
 & \text{MAMDE} & 
\bar{\xi} & = & \frac{1}{2} \left(\xi(0) + \xi(1)\right) \\[0.25em]
\hline\hline
\end{array}
\]\vspace{-1em}
\end{table}

The total effect of treatment, $\tau$, is exactly the average marginal component effect from a $Y(T)$-experiment (AMCE; \citealt{hainmueller2014causal}). The average marginal indirect and direct effects (i.e., AMIE, $\delta(t)$; AMDE, $\xi(t)$) mirror the standard mediation indirect and direct effects \citep{imai_2010_psychmeth}, but we adapt their name to explicitly note that they are defined conditional on marginalizing out auxiliary treatments $\bs$.\footnote{We focus on averaging over the empirical distribution of $\bSi$ in our main analysis; Appendix~\ref{app:mediation_fmla} extends this to alternative distributions (e.g., \citealt{de2022improving}).}
One must take care to interpret these marginal effects for the same reason that care is needed when interpreting the AMCE \citep{abramson2022we,bansak2023using}. All of these effects can be defined as linear combinations of the average nested potential outcomes. For example, the AMIE $\delta(t) = \alpha(t,1)-\alpha(t,0)$. When $t$ has many levels, it can be cumbersome to look at all $\delta(t)$ or $\xi(t)$; Table~\ref{tab:QOI} thus defines a marginalized version of each. For example, $\bar{\delta}$ averages together $\delta(t)$ across all levels of $t$---hence marginalized AMIE or MAMIE. It is a standard result that $\tau = \bar{\delta} + \bar{\xi}$ (e.g., \citealt{imai_2011_apsr}). This shows that $\tau$ is also a linear combination of all four $\alpha(t,t')$.

The conjoint setting makes it natural to consider an idealized experiment for how one might elicit these effects: Focusing on indirect effect $\delta(0)$, imagine one solicited the following information: (a) What party would you associate with a Black candidate with other attributes $\bs$? This reveals potential outcome $M_i(0)$. (b) What party would you associate with a white candidate with other attributes $\bs$? This reveals potential outcome $M_i(1)$. The individual-level indirect effect is obtained by comparing how the respondent would view two white candidates ($T_i=0$) with the party provided in (a) versus the party provided in (b), i.e. $Y_i(0,M_i(1),\bs)-Y_i(0,M_i(0),\bs)$. Averaging over individuals and auxiliary treatments $\bs$ obtains the average marginal indirect effect $\delta(0)$. This isolates the effect of race through its change in the distribution of beliefs on party---holding constant the race of the candidate when respondents are asked to evaluate a candidate. The average marginal direct effect $\xi(0)$ has a similar interpretation, see Appendix~\ref{app:example_direct} for an analogous stylized experimental design. This isolates the effect of candidate race on the outcome, holding constant the distribution of beliefs about party induced by white candidates. There is thus a ``descriptive'' logic to these direct and indirect effects versus a ``prescriptive'' logic of controlled direct effects where race and party are set by the experimenter \citep{pear:01}.

The key challenge for causal mediation is that the average marginal direct and indirect effects ($\delta(t)$ and $\xi(t)$) are not identified by randomization alone as they involve the cross-world counterfactuals, i.e., $Y_i(0, M_i(1), \bs)$, discussed above. To address this, assumptions must be made. While the most common assumption is sequential ignorability \citep[e.g.,][]{imai_2010_psychmeth}, we prefer to state our assumption in terms of principal strata \citep{fran:rubi:02}. \citet{forastiere_2018_biomet} discusses the connection between these assumptions. Principal strata are most commonly encountered in political science in the context of instrumental variables (e.g., \citealt{angrist1996identification}, though see \citealt{hartman2024improving} for different use). However, they can also be applied to mediation analysis \citep{forastiere_2018_biomet}. 

We define the set of principal strata as all mappings between treatment and mediator; fixing the auxiliary treatment at $\bs$, there are four principal strata in the binary treatment and binary mediator setting, listed below where $G_i = \{M_i(0,\bs), M_i(1,\bs)\}$ and noting the connection to instrumental variables for illustration:
\begin{itemize}
    \item $G_i = \{0,0\}$: Candidates of either race are assumed to be Republicans (i.e., ``never-takers'')
    \item $G_i = \{0,1\}$: White candidates are assumed to be Republicans; Black candidates are assumed to be Democrats (i.e., ``compliers'')
    \item $G_i = \{1,0\}$: White candidates are assumed to be Democrats; Black candidates are assumed to be Republicans (i.e., ``defiers'')
    \item $G_i = \{1,1\}$: Candidates of either race are assumed to be Democrats (i.e., ``always takers'')
\end{itemize}
In total there are $4^{2^{J-2}}$ principal strata, as there are four principal strata for each $\bs$ combination. The principal strata framework can help us understand the direct and indirect effects more intuitively: There are some strata $\mathcal{G}$ where the treatment affects the mediator, i.e., $M_i(1,\bs) \neq M_i(0,\bs)$. The average marginal indirect effect is a weighted sum of the size of the strata in $\mathcal{G}$ and particular average marginal component effects within those strata. A related characterization applies to the average marginal direct effect; Appendix~\ref{app:decompose_PI} formalizes this precisely.

It is common to interpret the principal strata $G_i$ as a some pre-treatment characteristic of the respondent and thus, if it were \emph{known}, one could adjust for it in analysis by including it in $\bXi$ \citep{fran:rubi:02}. A key assumption related to principal strata, therefore, is that the potential outcomes $Y_i(t,m,\bs)$ are conditionally independent of $G_i$ given $\bXi$. Formally, 

\begin{assumption}[Principal ignorability]
\begin{equation}\tag{A1}\label{eq:A_PI}
    Y_i(t,m,\bs) \ind G_i | \bXi.
\end{equation}
\end{assumption}

Assumption~\ref{eq:A_PI} is known as principal ignorability and, like sequential ignorability, can be a rather strong assumption \citep{forastiere_2018_biomet}. For example, setting aside covariates and auxiliary treatments, it would require that, say, ``compliers'' (i.e., people who believe that race and party are associated in the conventional way) are not systematically more in favor of, say, Black Democratic candidates than individuals in other principal strata. Fortunately, it can be made more credible when covariates are included. If we condition on, say, the respondent's race, Assumption~\ref{eq:A_PI} would only require that \emph{among} Black respondents there is no systematic relationship between $G_i$ and the potential outcomes.

Importantly, Assumption~\ref{eq:A_PI} has some partially testable implications given our experimental setup. First, note that if we had a perfect predictive model for the mediator, i.e., $M_i$ is perfectly predicted by $T_i$, $\bXi$ and $\bSi$, then this implies that principal strata membership $G_i$ is perfectly predictable based on $\bXi$ and ensures that Assumption~\ref{eq:A_PI} holds. If, empirically, the predictive model for $M_i$ is very strong, there is limited scope for violations of the assumption to occur. We thus encourage researchers to collect a rich set of $\bXi$ and use a high-quality predictive model; we return to this point in Sections~\ref{sec:estimation} and~\ref{sec:partial_test}.

Second, this assumption is stronger than is needed; strictly, we only need for \emph{certain} principal strata to be conditionally independent of the potential outcomes for estimation of the cross-world conditional expectations, e.g., $\E[Y_i(t,M_i(t'),\bs)]$ for $t \neq t'$. However, a benefit to making this stronger assumption is that it leads to a testable implication discussed in detail in Section~\ref{sec:partial_test}: Does the modeled ``within-world'' counterfactual, i.e., $\alpha(t,t)$, align with one that can be non-parametrically estimated using mild assumptions from a $Y(T)$-experiment? If so, this adds credibility to the principal ignorability assumption.

\subsection{Experimental Design and Identifying Assumptions}

In our experimental design, we split subjects randomly into three groups. The first group are presented randomized value of $T_i$, $M_i$, $\bSi$, and their $Y_i$ value is measured, i.e., a $Y(T,M)$-experiment; this is the standard conjoint experiment that is conducted by researchers. The second group is shown a randomized $T_i$ and $\bSi$ and their $M_i$ is observed, i.e., a $M(T)$-experiment. The third group has only $T_i$ and $\bSi$ randomized and $Y_i$ is measured. We use $A_i$ to indicate the study into which a person was assigned, i.e. $A_i \in \{0,1,*\}$ where $0$ indicates the $M(T)$-study, 1 indicates the $Y(T,M)$ study and $*$ indicates the auxiliary $Y(T)$-experiment that we will discuss in Section~\ref{sec:partial_test}. The use of different randomized experiments requires a second critical assumption \citep{imai_2013_jrss,acharya_2018_polan} that we define as Assumption~\ref{eq:A_ExcluRes},

\begin{assumption}[Manipulation exclusion restriction]
\begin{equation}\label{eq:A_ExcluRes}\tag{A2}
    Y_i(t,m,s, a) = Y_i(t,m,s,a'); \quad M_i(t,s,a) = M_i(t,s,a').
\end{equation}
\end{assumption}

It states that the study to which an individual is assigned does not affect their potential outcomes. \cite{acharya_2018_polan} refer to this as a ``manipulation exclusion restriction'' and note that it requires that the experiment type has no effect on the outcome. The major threat to this assumption is if explicitly providing information on $M$ (party) changes how respondents view $T$ (race) versus how race is viewed if $T$ is induced by $M$. If that occurs, it is difficult to combine mediation information to learn about the quantities of interest as the mediator, effectively, is different across the two experiments \citep{imai_2013_jrss,acharya_2018_polan}. Given this assumption, however, we may drop $a$ from the potential outcomes and only refer to $Y_i(t,m,\bs)$. 

The next three assumptions are more standard and can be satisfied by our proposed design. We condition on $A_i$ for Assumptions~\ref{eq:A_RandYTM} and~\ref{eq:A_RandMT} to note that this assumption is only relevant for the corresponding study. Note that the mediator $M_i$ is randomized in the $Y(T,M)$-experiment (i.e., $A_i = 1$), and thus Assumption~\ref{eq:A_RandYTM} holds by design.

\begin{assumption}[Ignorability and positivity of experiment type]
    \begin{equation}\label{eq:A_RandExp}\tag{A3}
        Y_i(t,\bs,m) \ind A_i | \bXi; \quad 
         M_i(t,\bs) \ind A_i | \bXi; \quad 
         \mathrm{Pr}(A_i = a | \bXi) \in (0,1).
    \end{equation}
\end{assumption}

\begin{assumption}[Ignorability and positivity in $Y(T,M)$-experiment]
    \begin{equation}\label{eq:A_RandYTM}\tag{A4}
        Y_i(t,m,\bs) \ind (T_i, \bSi, M_i) | \bXi,  A_i = 1; \quad \mathrm{Pr}(T_i = t, \bSi = \bs,  M_i = m | \bXi, A_i = 1) \in (0,1).
    \end{equation}
\end{assumption}

\begin{assumption}[Ignorability and positivity in $M(T)$-experiment]
    \begin{equation}\label{eq:A_RandMT}\tag{A5}
        M_i(t, \bs) \ind (T_i,\bSi) | \bXi,  A_i =0; \quad \mathrm{Pr}(T_i = t, \bSi = \bs | \bXi, A_i = 0) \in (0,1).
    \end{equation}
\end{assumption}

Given Assumptions~\ref{eq:A_PI}-\ref{eq:A_RandMT}, all average nested potential outcomes can be identified from observable quantities as

\begin{equation}\label{eq:mediation_fmla}
\begin{split}
    \alpha(t,t') = \E[Y_i(t, M_i(t' ,\bSi), \bSi)] &= \sum_{\bx,\bs,m} \mu_Y(t, m, \bx, \bs) \times e_m(t', \bx, \bs) \times \mathrm{Pr}(\bXi = \bx, \bSi = \bs), \\
    \mu_Y(t,m,\bx,\bs) &= \E[Y_i | T_i = t, M_i = m, \bXi = \bx, \bSi = \bs, A_i = 1], \\ e_m(t', \bx, \bs) &= \mathrm{Pr}\left(M_i = m | T_i = t', \bXi = \bx, \bSi = \bs, A_i = 0\right).
\end{split}
\end{equation}

Appendix~\ref{app:mediation_fmla} provides the derivation. This is a version of the standard ``mediation formula'' \citep{pear:01} tailored to our setting where some information is missing (e.g., $M_i(T_i,\bSi)$ in the $Y(T,M)$-experiment) and we marginalize over the empirical distribution of the auxiliary treatments $\bs$. As Table~\ref{tab:QOI} notes, all mediation quantities previously discussed can be computed by simple linear combinations of $\alpha(t,t')$. For example, the average marginal indirect effect $\delta(t) = \alpha(t,1) - \alpha(t,0)$.

\section{Robust Estimation of Nested Counterfactuals}\label{sec:estimation}

Estimating $\alpha(t,t')$ from data is challenging, however, as it depends on two high-dimensional conditional expectation functions $\mu_Y(t,m,\bx,\bs)$ and $e_m(t,\bx, \bs)$. This problem of estimating a low-dimensional parameter of interest, $\alpha(t,t')$, given high-dimensional ``nuisance'' functions is a central question for the literature on machine learning and causal inference (see \cite{ratkovic2023relaxing} for an accessible introduction for political science). The key challenge is to ensure that one can obtain consistent and asymptotically normal estimators of the causal quantities of interest even when machine learning methods are used.

A large body of work has explored how to do this for standard causal quantities such as the average treatment effect  (ATE) or average treatment effect on the treated (ATT) \citep[e.g.,][]{chernozhukov2018dml,ratkovic2023relaxing}, with some more limited work on mediation quantities \citep[e.g.,][]{farbmacher_2022_econmet,liu2024general}. Our approach is closest to \cite{farbmacher_2022_econmet}'s application of double machine learning to estimate mediation effects from observational data. However, our setting is more complex insofar as we do not ever observe a $(Y_i, T_i, M_i)$ for an individual as the natural value of the mediator $M_i = M_i(T_i, \bSi)$ is not recorded. 

To derive an appropriate estimator for our design, we use results from \cite{kennedy2024semiparametric} on influence functions to construct a doubly robust estimator of the following form. Denote $p^T_{t}(\bx, \bs, a)$ as the propensity score for seeing treatment $T_i = t$, i.e., $\mathrm{Pr}(T_i = t | \bXi = \bx, \bSi = \bs, A_i  = a)$ and $p^M_{m}(t, \bx, \bs)$ for the propensity score for a particular fixed value of the mediator being observed in the $Y(T,M)$-experiment, i.e., $\mathrm{Pr}(M_i = m | \bXi = \bx, \bSi = \bs, A_i  = 1)$. $p_a(\bx, \bs)$ is the propensity score for being in experiment $A_i = 1$.  Equation~\ref{eq:influ} provides an influence function $\psi_i(t,t')$ whose expectation recovers $\alpha(t,t')$:

\begin{equation}\label{eq:influ}
\begin{split}
\alpha(t,t') &= \E[Y_i(t, M_i(t', \bSi), \bSi)] = \E[\psi_i(t,t')] \\
\psi_i(t,t') &= \frac{\I{T_i = t, A_i = 1}}{p^T_t(\bXi, \bSi, A_i = 1) p_a(\bXi,\bSi)} \frac{e_{M_i}(t', \bXi, \bSi)}{p^M_{M_i}(t,\bXi, \bSi) }\left[Y_i - \mu_Y(t,M_i,\bXi, \bSi)\right] + \\ &\frac{\I{T_i = t',A_i=0}}{p^T_{t'}(\bXi, \bSi, A_i = 0)\left(1-p_a(\bXi, \bSi)\right)} \left[\begin{split}&\mu_Y(t,M_i, \bXi, \bSi) \\&- \sum_m \mu_Y(t,m, \bXi, \bSi) \times e_m(t', \bXi, \bSi)\end{split} \right] + \\
&\sum_m \mu_Y(t, m, \bXi, \bSi) \times e_m(t', \bXi, \bSi).
\end{split}
\end{equation}

Appendix~\ref{app:influ} provides a derivation. Note that an expectation over the last term in this equation over $\bXi$ and $\bSi$ would recover Equation~\ref{eq:mediation_fmla}. The first two terms, therefore, ensure robustness to mis-specification of the predictive models. The first adjusts for errors in the conditional expectation of the outcome in the $Y(T,M)$-experiment, $\mu_Y(\cdot)$, and the second adjusts for errors estimating the mediator-treatment relationship using the $M(T)$-experiment, $e_m(\cdot)$. This influence function is quite similar to \cite{farbmacher_2022_econmet} but accounts for the particular features of our design and conjoint experiments.

The relevant nuisance functions, i.e., the conditional expectations---$\mu_Y$ and $e_m$---and the propensity scores---$p^T_{t}$, $p^M_m$, $p_a$---must be estimated carefully to obtain valid estimates.\footnote{In our proposed design, $p^T_t$, $p^M_m$, and $p_a$ are known so their true values could be plugged-in directly.} We use cross-fitting where we partition the data into $K$ separate folds. For each fold $k$, the data from the other folds are used to estimate the nuisance functions ($\hat{\mu}_Y$, $\hat{e}_m$, $\hat{p}_a$, $\hat{p}^T_t$, $\hat{p}^M_m$). Then, an estimate of the influence function $\hat{\psi}_i(t,t')$ is obtained for each $i$ in the $k$-th fold using Equation~\ref{eq:influ} with estimated nuisance functions fit using held-out data. This has numerous benefits including avoiding overfitting bias from estimating the machine learning models, reducing the convergence rate needed for inference to ones obtainable by flexible methods such as random forests, and allowing one to obtain properly calibrated standard errors (see, e.g., \citealt{chernozhukov2018dml,ratkovic2023relaxing,kennedy2024semiparametric} for details).

Standard theory allows us to construct a consistent and asymptotically normal estimator for $\alpha(t,t')$ using $\hat{\psi}_i(t,t')$ (see, e.g., \citealt{farbmacher_2022_econmet} for accompanying technical conditions in the mediation setting or the previous citations for general discussion):
\begin{equation}\label{eq:influ_est}
    \begin{split}
        \hat{\alpha}(t, t') = \frac{1}{N} \sum_{i=1}^N \hat{\psi}_i(t,t'), \quad \hat{\sigma}^2(t,t')&= \frac{1}{N} \sum_{i=1}^N \left(\hat{\psi}_i(t,t') - \hat{\alpha}(t,t')\right)^2, \\
        \frac{\sqrt{N}\left(\hat{\alpha}(t,t') - \alpha(t,t')\right)}{\sqrt{\hat{\sigma}^2(t,t')}} &\to^d N(0,1).
    \end{split}
\end{equation}

Recall, however, that our primary targets of interest are comparisons between $\alpha(t,t')$, e.g., $\delta(1) = \alpha(1,1) - \alpha(1,0)$. To estimate these, we note that the influence function for $\theta_a - \theta_b$ is the difference between the individual influence functions, i.e., $\psi_a - \psi_b$ \citep{farbmacher_2022_econmet,kennedy2024semiparametric}. By default, our software produces an estimate for all $\hat{\alpha}(t,t')$ and the accompanying variance-covariance matrix. Again, by standard assumptions, this is asymptotically normal and thus confidence intervals with correct coverage can be produced. Appendix~\ref{app:regestimate} proves that all quantities of interest in Table~\ref{tab:QOI} can be easily computed using simple linear regression using the $\hat{\psi}_i(t,t')$ as pseudo-outcomes.

\section{(Partially) Testing the Assumptions}\label{sec:partial_test}

The key assumptions from our model that are not guaranteed by design are principal ignorability (Assumption~\ref{eq:A_PI}) and the manipulation exclusion restriction (Assumption~\ref{eq:A_ExcluRes}). Both could be seen as rather strong, although conditioning Assumption~\ref{eq:A_PI} on $\bXi$ increases its credibility if $\bXi$ explain much of the relationship between $T_i$ and $M_i$. 

As noted above, we advocate for conducting a separate $Y(T)$-experiment, i.e., $A_i = *$. This allow us to estimate the average marginal component effect $\tau$ and $\E[Y_i(t,M_i(t,\bSi),\bSi)]$ non-parametrically, given the assumption of randomized $T$ that holds by design (i.e., a slight adaption of Assumption~\ref{eq:A_RandYTM}). Specifically, in this experiment, $\phi(t)=\E[Y_i | T_i = t, A_i =*]$ and $\tau = \phi(1) - \phi(0)$ \citep{hainmueller2014causal}. This can be estimated using the sample average. As the respondents for whom $A_i = *$ are totally separate from those used in estimating $\hat{\psi}_i(t,t')$, there is no concern with information leakage and the two estimators, $\hat{\phi}(t)$ and $\hat{\alpha}(t,t)$, are independent by design.

If the assumptions are satisfied, this would require that $\phi(t) = \alpha(t,t)$, or, equivalently that $\phi(1)-\phi(0) = \alpha(1,1) - \alpha(0,0)$. If this null hypothesis can be rejected, it suggests there is a violation of the assumption: Either, principal ignorability or manipulation exclusion restriction fail to hold; or, perhaps, the predictive model used for $\mu_Y(\cdot)$ and $e_m(\cdot)$ are inadequate and should be adjusted. If the null hypothesis cannot be rejected, this gives some credibility to the untestable assumptions.

If there is a difference between the two experiments that cannot be reconciled by improving the predictive model, it may be useful to examine if there are subsets of the data for which the null hypothesis cannot be rejected. For example, if it turns out that for Black respondents (denoted by $W_i = 1$), we cannot reject the null of $\E[Y_i |W_i=1,T_i = t, A_i = *]=\E[\hat{\psi}_i(t,t')|W_i=1,A_i \in \{0,1\}]$, then the assumptions are more credible for those respondents. Thus, that sub-group effect may be viewed as more reliable. Similarly, in the case where $T_i$ or $M_i$ have multiple values (e.g., candidates are assigned to be white, Black, Hispanic or Asian; the candidate could be a Democrat, Republican, or Independent), it could be that the assumptions only holds for a subset of potential outcomes. That would suggest restricting the analysis to only those profiles where the falsification test holds, i.e., only consider $\alpha(t,t')$ where one cannot reject the null of $\phi(t) = \alpha(t,t)$ and $\phi(t') = \alpha(t',t')$.

Appendix~\ref{app:sensitivity} uses this equivalence to design a novel sensitivity analysis that can be applied in our doubly robust framework, adapting existing work (\citealt{robins2000sensitivity,tchetgen2012semiparametric}). In brief, it uses the measured differences between $\phi(t)$ and $\alpha(t,t)$ to estimate the bias that comes from violating principal ignorability and/or the manipulation exclusion restriction. It then uses this bias estimate to adjust the reported mediation quantities to examine if the results remain robust.

\section{Extensions to the Basic Model}\label{sec:extensions}

Applied examples of conjoint experiments often go beyond the setting we formalized above. Some extensions, e.g., allowing the mediator or treatments to have multiple levels, are straightforward. This section discusses some additional extensions to make our framework usable alongside most standard conjoint experiments.

\subsection{Adjusting the Experimental Design}\label{sec:alt_exp}

In the design discussed above, the researcher simultaneously conducts a $Y(T,M)$ experiment and a $M(T)$ experiment, randomly allocating respondents to each study. This has benefits insofar as it means there is independence between the respondents in each study and thus certain assumptions, stated above, hold by design. 

One adjustment of the design allows for our method to be applied in a wider set of circumstances. Imagine that an existing study had conducted a $Y(T,M)$-experiment and the researcher wished to understand mediation in that study without re-conducting the same experiment---perhaps due to cost constraints. Using our framework, they would only need to conduct a $M(T)$-experiment, estimate $\mu_Y(\cdot)$ from the existing study and $e_m(\cdot)$ from the new study. In this setting, however, note that Assumption~\ref{eq:A_RandExp}---conditional ignorability of the experimental condition $A_i$---is not assured by design. The estimation requires modeling the propensity score ($p_a(\bx,\bs)$ in Equation~\ref{eq:influ}) and can adjust for differences between experimental samples, but careful argumentation would be needed to rule out unobserved confounding. Fielding the secondary study on a similar population (e.g., using the same vendor as the original study within a reasonable number of years as the original study) might ameliorate some concerns.

A different adjustment to the experimental design would have the same respondent participate in multiple experiments, e.g., in both the $Y(T,M)$ and $M(T)$-experiments. This would allow for improved power insofar as more data could be obtained from the same respondents. Some existing studies---specifically \cite{kuriwaki2025winning}---have already implemented a version of this design. However, re-using respondents across tasks requires careful argumentation to ensure that the assumptions are met. Specifically, the manipulation exclusion restriction (Assumption~\ref{eq:A_ExcluRes}) is perhaps less credible if a respondent has done multiple tasks in quick succession. Further, one would need to make an assumption about stability and no carry-over effects. One must assume that individuals do not change how they relate treatment to outcome (e.g., candidate choice) after having been primed by a task relating the mediator to treatment (e.g., candidate race to candidate party) and vice versa. A more mild version of no carry-over is, in fact, already rather common in conjoint experiments that have respondents perform multiple tasks \citep{hainmueller2014causal}. If one is willing to believe a strong version of a no carry-over assumption, Appendix~\ref{app:crossover} sketches how one might extend our method to use a crossover design to identify direct and indirect effects without requiring principal ignorability.

\subsection{Heterogeneous Effects}\label{sec:heteff}

The above discussion has focused entirely on the average total, direct and indirect effects. Instead, interest may be in estimating conditional direct and indirect effects, i.e., $\E[Y_i(t,M_i(1))-Y_i(t,M_i(0))|\bXi]$. To treat this formally, we consider the generalization of the conditional average treatment effect to mediation, e.g., the conditional average marginal direct effect, the conditional average marginal indirect effect, and the conditional average marginal total effect. These can be defined, as above, using $\alpha(t,t';\bx) = \E[Y_i(t,M_i(t'),\bSi) | \bXi = \bx]$ where, for example, a conditional average indirect effect is $\delta(1;\bx) = \alpha(1,1;\bx) - \alpha(1,0;\bx)$.

\cite{semenova2021debiased} provide a framework for estimating low-dimensional summaries of these heterogeneous effects using $\hat{\psi}_i(t,t')$ as they satisfy, by construction, required orthogonality conditions \citep{kennedy2024semiparametric}. Specifically, for a binary covariate $W_i \in \{0,1\}$, assumed to be a subset or deterministic transformation of $\bXi$, one might wish to compare whether the conditional effect is larger when $W_i = 1$ versus $W_i=0$ with an accompanying standard error. If we focus on the indirect effect $\delta(1)$, this could be estimated by computing the within strata average of $\hat{\psi}_i(1,1)-\hat{\psi}_i(1,0)$ and then taking the difference. Appendix~\ref{app:regestimate} shows that an equivalent way to estimate this is to run a linear regression on the $\hat{\psi}_i(1,t)$ with an interaction between an indicator variable for $t'=1$ and $W_i = 1$. Under mild additional assumptions, this is a consistent and asymptotically normal estimator of the difference in average marginal indirect effects between groups \citep{semenova2021debiased}. Appendix~\ref{app:deriv_heteff} generalizes to multiple $W_i$ using a standard (low-dimensional) linear regression to predict $\hat{\psi}_i(t,t')$. This returns a consistent estimator of the best linear approximation to the conditional effect function for any of our mediation quantities with valid confidence intervals.

\subsection{Repeated Observations and Forced-Choice Design}

It is common to extend the design in conjoint experiments to have respondents perform multiple tasks and to have respondents choose between two profiles---i.e., a ``forced choice'' conjoint. The former requires standard assumptions of stability, no carry-over and no profile order effects \citep{hainmueller2014causal} but otherwise fits into our framework. For forced choice experiments, using the popular choice-level analysis that stacks the data for each profile and clusters the standard errors accordingly can be easily implemented in our framework. Future work could extend our method to the profile-level analysis \citep{clayton2025correcting}, perhaps by differencing the treatments between profiles \citep{egami2019causal,goplerud2025estimating}.

We note that the $M(T)$-experiment can be done either as a factorial or forced choice experiment depending on what is most plausible for the application. In our case of identifying party, we believed that it was easier to solicit a ``best guess'' as to the party of a hypothetical candidate vs. asking which one was ``more likely'' to be a Democrat.

\subsection{Multiple Mediators}\label{sec:multiple_mediator}

In many settings, there are often multiple factors that might mediate the effect of treatment $T$. For example, in our setting where party is our primary mediator, scholars have suggested that policy positions or ideology may be inferred from candidate characteristics such as gender (e.g., \citealt{abramson2024gender}). Even an important mediator such as party itself may also not be the ``final'' mediator: \citet{orr2020policy} show that the effect of partisanship itself might flow through induced beliefs about policy positions. It is an open question how to extend our $M(T)$-experiment to the case of multiple and/or causally ordered mediators (e.g., as analyzed in \citealt{zhou2022semiparametric}) or with more complex causal stories. However, one approach to stay within the single mediator framework would be to define a single combined mediator from the multiple mediators (e.g., ``Democrat \& Liberal'', ``Republican \& Moderate'', ``Independent \& Conservative'', etc.) and adjust the $M(T)$-experiment accordingly. Appendix~\ref{app:multiple_mediators} sketches how one might do this. In general, we suggest that scholars define $M$ as the most important or theoretically relevant mediator to allow for the indirect and direct effects with respect to that mediator to be recovered, noting that the channels that flow through other mediators are absorbed into $T$ (or $\bm{S}$). 

\section{Application to Kirkland and Coppock}\label{sec:empirical}

We apply our method to a pre-registered replication of \cite{kirkland_candidate_2018}. Their study focuses on how voters form preferences on candidates for mayor both in the presence and absence of party information, noting that many mayoral elections in the United States are officially non-partisan and do not permit the inclusion of party information on the ballot. The original study conducts a $Y(T,M)$-experiment and $Y(T)$-experiment, in our framework, where $M_i$ is the partisanship of the candidate (Democrat, Republican or Independent), $T_i$ is the candidate's race and $\bSi$ includes age, gender, previous occupation, and political experience. 

\cite{kirkland_candidate_2018} do not perform a mediation analysis or interpret the results in those terms. If one were to interpret their analysis using our framework, they focus on both the AMCE from the $Y(T)$-experiment (a ``non-partisan election'' where party is not provided) and how this changes when compared against the AMCE from the $Y(T,M)$- experiment (a ``partisan'' election where party is provided to the respondent). While they do not describe or interpret it as such, this is equivalent to the ``eliminated effect'' \citep{vanderweele_2014_epidem,acharya_2018_polan}. As discussed above, a non-zero eliminated effect gives evidence that there is \emph{some} role of $M_i$ as a causal mechanism and implies some portion of the effect flows through either mediation or a mediated interaction, but it is not able to dis-entangle direct and indirect effects. In fact, a non-zero eliminated effect could be consistent with \emph{either} a zero or non-zero indirect effect \citep{vanderweele_2014_epidem,acharya_2018_polan}. However, we think it is reasonably common to infer that \emph{some} indirect effect exists if there is an eliminated effect, and thus we used that to guide our pre-registration (available \chref{\urlprereg}{here}). Specifically, we pre-registered the following six hypotheses around the existence and sign of indirect effects assuming that the effects shown in \cite{kirkland_candidate_2018} imply the existence of an indirect effect:

\begin{enumerate}
    \item A positive indirect effect for Black vs. white candidates among Democratic respondents.
    \item A negative indirect effect for Black vs. white candidates among Republican respondents.
    \item A positive indirect effect of political experience across all respondents
    \item A positive indirect effect of political experience for Democratic respondents but not Republican respondents.
    \item A negative indirect effect for male vs. female candidates among Democratic respondents.
    \item A positive indirect effect for police officers and small business owners (vs. educators) for Republican respondents.
\end{enumerate}

In brief, the pre-registered hypotheses suggest there are certain factors (race, gender, and occupation) for which there is an effect on voter choice through its priming of party. To test this, we conducted a study on Prolific in the fall of 2025 where approximately 4,500 subjects were recruited. We chose Prolific as it is seen as a generally high-quality platform for online survey respondents (e.g., \citealt{douglas_data_2023}). Appendix~\ref{app:survey_details} provides additional details such as balance tests of pre-treatment covariates across designs, quotas used to recruit respondents to ensure a sufficient number of Republican respondents, and information on our attention check).

These respondents were allocated randomly to one of three designs ($A_i \in \{0,1,*\}$) and each performed five comparison or rating tasks. We designated the candidate's party $M_i$ as the key mediator and elicited information on how respondents associate this with other attributes in the $M(T)$-experiment. In the $M(T)$-experiment, we showed respondents a hypothetical candidate and asked them to give their ``best guess as to whether the candidate was a Democrat, Republican, or Independent''.\footnote{We randomize the order of the options across respondents. Appendix~\ref{app:survey_instrument} provides the full survey instrument; the $Y(T)$ and $Y(T,M)$-experiments closely mirrors \cite{kirkland_candidate_2018}.} The remainder of this section analyzes the results of these experiments, focusing in particular on our $M(T)$-experiment and mediation quantities. Appendix~\ref{app:emp_eliminated} reports the effects from the $Y(T)$ and $Y(T,M)$ experiments as well as the eliminated effects. It also compares these estimates to those reported in \cite{kirkland_candidate_2018}; in general, the effects are relatively similar across studies.

\subsection{$M(T)$-Experiment}

We did not pre-register hypotheses about the $M(T)$-experiment as our primary interest is on the mediation quantities. However, as these experiments are less common in political science, we report the results here. We further note that if there were no effects of $T$ on $M$, no indirect effect could exist by definition. Thus, we encourage researchers to pre-register and report the results of the $M(T)$-experiment, as this can be identified without requiring strong assumptions (i.e., without principal ignorability), to provide evidence about how $T$ affects $M$. Figure~\ref{fig:MT} shows the average treatment effect on the belief about a candidate's party. To examine whether this relationship differs by respondent, we show the results separately by Democratic and Republican respondents; Appendix~\ref{app:full_results} provides results for the full sample and for Independent respondents. Effects for all factors (age, gender, previous occupation [``Job''], political experience [``Exp.''], race) are shown.

\begin{figure}
    \caption{Estimated Effects from $M(T)$-Experiment}\label{fig:MT}
    \includegraphics[width=\textwidth]{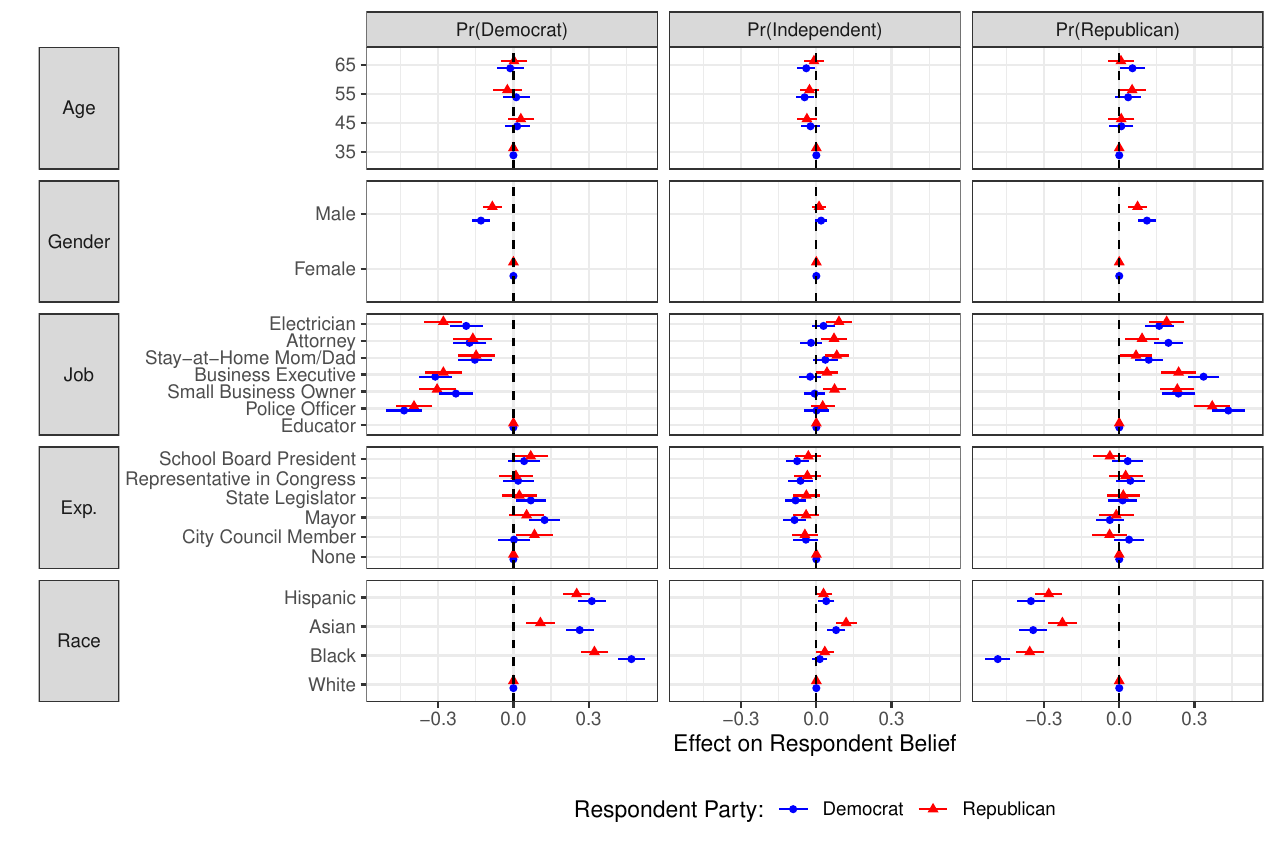}
\end{figure}

Figure~\ref{fig:MT} shows that beliefs about a candidate's party are affected by other characteristics. For example, non-white candidates are viewed as much more likely to be Democrats across all types of respondents. Further, occupation also strongly affects beliefs about candidate party. There are negative and significant treatment effects for all occupational categories versus the baseline of ``educator'' on the probability of being a Democrat (and positive and significant effects for being a Republican). Candidate gender is also associated with perceived party as male candidates are perceived as significantly more likely to be Republicans (and less likely to be Democrats). While the effects by the respondent party are generally similar, there is some evidence that Democratic respondents show stronger associations with party and other characteristics than Republican respondents. Exploring this in future pre-registered studies is an interesting area for further research.

\subsection{Mediation Analysis}

Our initial focal treatment $T_i$ is race. Following \cite{kirkland_candidate_2018}, $T_i \in \mathcal{T}$ where $\mathcal{T}=\{\texttt{White}, \texttt{Black}, \texttt{Asian}, \texttt{Hispanic}\}$. This leads to many possible average direct, indirect and total effects as different contrasts of candidate races. For simplicity, we compute the direct, indirect and total effects versus a single reference category (white). As noted in Table~\ref{tab:QOI}, we also average together the $\delta(t)$ for all values of $\mathcal{T}$ with equal probability and denote the corresponding quantity as $\bar{\delta}$; we call this the marginalized average marginal effect (MAMIE) and do the same for the direct effects to obtain the MAMDE ($\bar{\xi}$). For notational simplicity, we sometimes refer to these as indirect and direct effects in the following discussion.\footnote{Our pre-registration does not state the precise estimand (saying merely ``direct'' or ``indirect'' effect). Thus, Appendix~\ref{app:extra_a_effects} also shows the results for each $\delta(t)$ and $\xi(t)$. The results are virtually identical.} Appendix~\ref{app:survey_ML} provides details on how the cross-fitting was conducted to obtain $\hat{\psi}_i(t,t')$; in brief, we used a random forest where treatments and pre-treatment demographic covariates were included as predictors.

A benefit of our approach is that once we have fixed the key mediator as $M_i$, we can re-define the focal treatment as some other attribute (e.g., candidate political experience), repeat the cross-fitting procedure, and obtain effects for that factor as well. Figure~\ref{fig:main_results} reports the 
(marginalized) average marginal direct, indirect and total effects for all factors in the study (age, gender, previous occupation [``Job''], political experience [``Exp.''], and race). Given that five of our six pre-registered hypotheses concern effects for Democratic and Republican respondents and that our sample skews Democratic despite our pre-registered quotas on Prolific, we present all of our results in the main text for those two sub-groups.\footnote{We include partisan leaners with their respective party; using self-reported partisanship, our respondent pool was 47.2\% Democrats, 37.3\% Republican, and 15.5\% Independent, see Appendix~\ref{app:survey_demo} for details.} Appendix~\ref{app:full_results} shows the results for the entire sample and for Independent respondents, although we did not pre-register specific hypotheses for this group.

\begin{figure}[!htbp]
\caption{Estimated Mediation Effects}\label{fig:main_results}
\includegraphics[width=\textwidth]{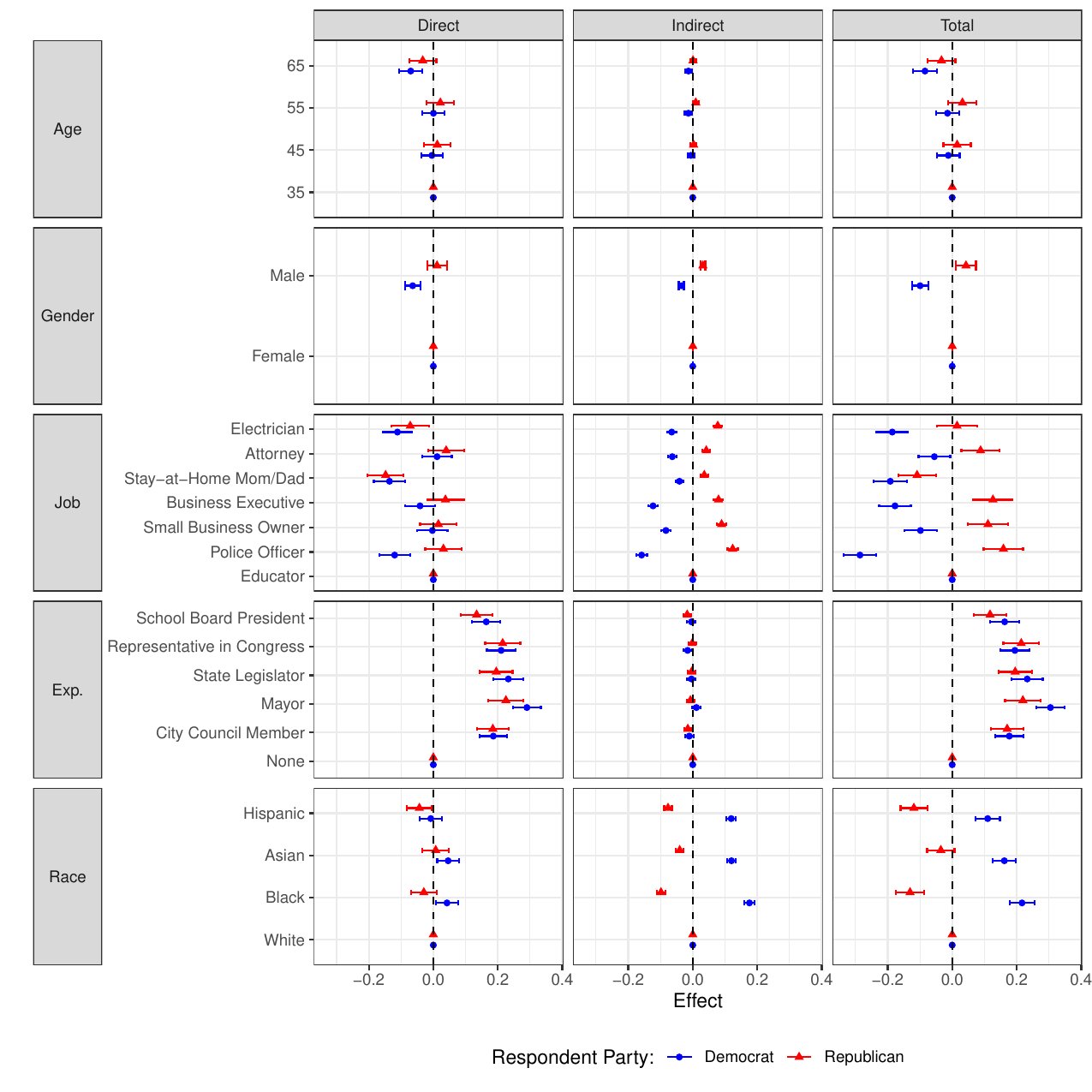}
\end{figure}

\textbf{Race:} There are sharply different indirect effects by respondent party. Republicans have a strong indirect penalty against non-white candidates because of how race affects their belief as to the candidate's party whereas Democrats have strongly positive indirect effects. This provides strong evidence for our first and second pre-registered hypotheses and aligns with our motivating example that Black candidates are heavily assumed to be Democratic. To illustrate this, our $M(T)$-experiment shows that 70\% of Black candidates are assumed to be Democrats versus 30\% of white candidates, 45\% of Asian candidates, and 55\% of Hispanic candidates. The direct effects of race are much smaller in magnitude, but are sometimes significant. For Democratic respondents, the effects for Black and Asian candidates are both significant ($p$-value of 0.017 and 0.007, respectively); for Republican respondents, the negative direct effect for Hispanic candidates is significant ($p$-value of 0.030). This shows that looking only at the $Y(T)$-experiment could be misleading as the total effects found in that study are heavily mediated through the candidate's party (see Appendix~\ref{app:emp_eliminated}).

\textbf{Political Experience (``Exp.'')}: There is considerable homogeneity across Democratic and Republican respondents, and the total effect is almost entirely through a direct effect. The indirect effects are substantively small in magnitude with only three rising to conventional levels of statistical significance.\footnote{For Democratic respondents, only Representative in Congress ($p$-value of 0.02) is significant. For Republicans, only School Board President and City Council Member ($p$-value of 0.01 for both).} This differs from our third and fourth pre-registered hypotheses regarding the role of political experience.\footnote{Our third pre-registered hypotheses is about an effect across all respondents; pooled results in Appendix~\ref{app:full_results} show large and significant direct effects. While the indirect effects are statistically significant (except for Mayor), they are similarly very small in magnitude.} As Appendix~\ref{app:emp_eliminated} shows positive eliminated effects---especially for Democratic respondents, this is illustrates that eliminated effects can be consistent with substantively small indirect effects.

\textbf{Gender:} There are indirect effects of opposing direction for Democrats and Republicans---i.e., Democrats show an indirect preference against male candidates. This supports our fifth pre-registered hypothesis. This again corresponds to the raw data in the $M(T)$-experiment where 57\% of female candidates were assumed to be Democrats versus 45\% of male candidates. Interestingly, Democratic respondents also show a negative direct effect towards male candidates. As discussed in Section~\ref{sec:multiple_mediator}, a possible explanation could be that gender signals ideology---in addition to partisanship (see, e.g., \citealt{orr2020policy,abramson2024gender}).

\textbf{Candidate Occupation (``Job'')}: We again see sharply distinct indirect effects. Democratic respondents have a negative indirect effect for all other occupations (versus educator) and Republicans have positive indirect effects. This is, again, due to the fact that educators are overwhelmingly estimated to be Democratic: 75\% of profiles in the $M(T)$-experiment with educator are believed to be Democrats. Interestingly, this factor also shows clear evidence of direct effects: Respondents of both parties show clear negative direct effects towards electricians and stay-at-home parents vs. being an educator. More strikingly, Democratic respondents also show a clear negative direct effect against police officers---in addition to their negative indirect effects. We do not find support for our sixth pre-registered hypothesis of a positive indirect effect for police officers and small business owners for Republican respondents.

\textbf{Age:} We did not pre-register hypotheses about this variable, but we report results on it for completeness. There are not many significant effects except for a negative direct effect for Democratic respondents against the oldest candidates (65) versus the youngest candidates (35) and substantively small---but significant with $p<0.05$---indirect effects for Democratic respondents for the two oldest groups (55 and 65).

\subsection{Exploratory Heterogeneous Effects}

Beyond our pre-registered hypotheses, it is interesting and important to explore for further heterogeneity.\footnote{Our pre-registration plan states that we will conduct exploratory testing for heterogeneous effects, although it does not specify how as it was submitted before the specific method was finalized.} However, we measured many respondent characteristics, and performing many sub-group analyses is both challenging to interpret and raises concerns about multiple-testing problems \citep{goplerud2025estimating}. To address this, we use the approach in Section~\ref{sec:heteff} and produce the best linear approximation to the conditional effect functions, i.e., run a linear regression to predict $\hat{\psi}_i(t,t')$ using pre-treatment covariates (age, education, ethnicity, gender, ideology, income, party ID, and political interest). We then compute the change in the predicted direct, indirect and total effects if one covariate changed (e.g., respondent party) holding all others constant. Focusing on race as the focal treatment, Figure~\ref{fig:blp_heteff} shows how the predicted average marginal direct, indirect and total effects for a Black vs. white candidate change holding other respondent-level covariates in the linear predictive model constant.

\begin{figure}[!htbp]
\caption{Exploratory Heterogeneous Effects}\label{fig:blp_heteff}
\includegraphics[width=\textwidth]{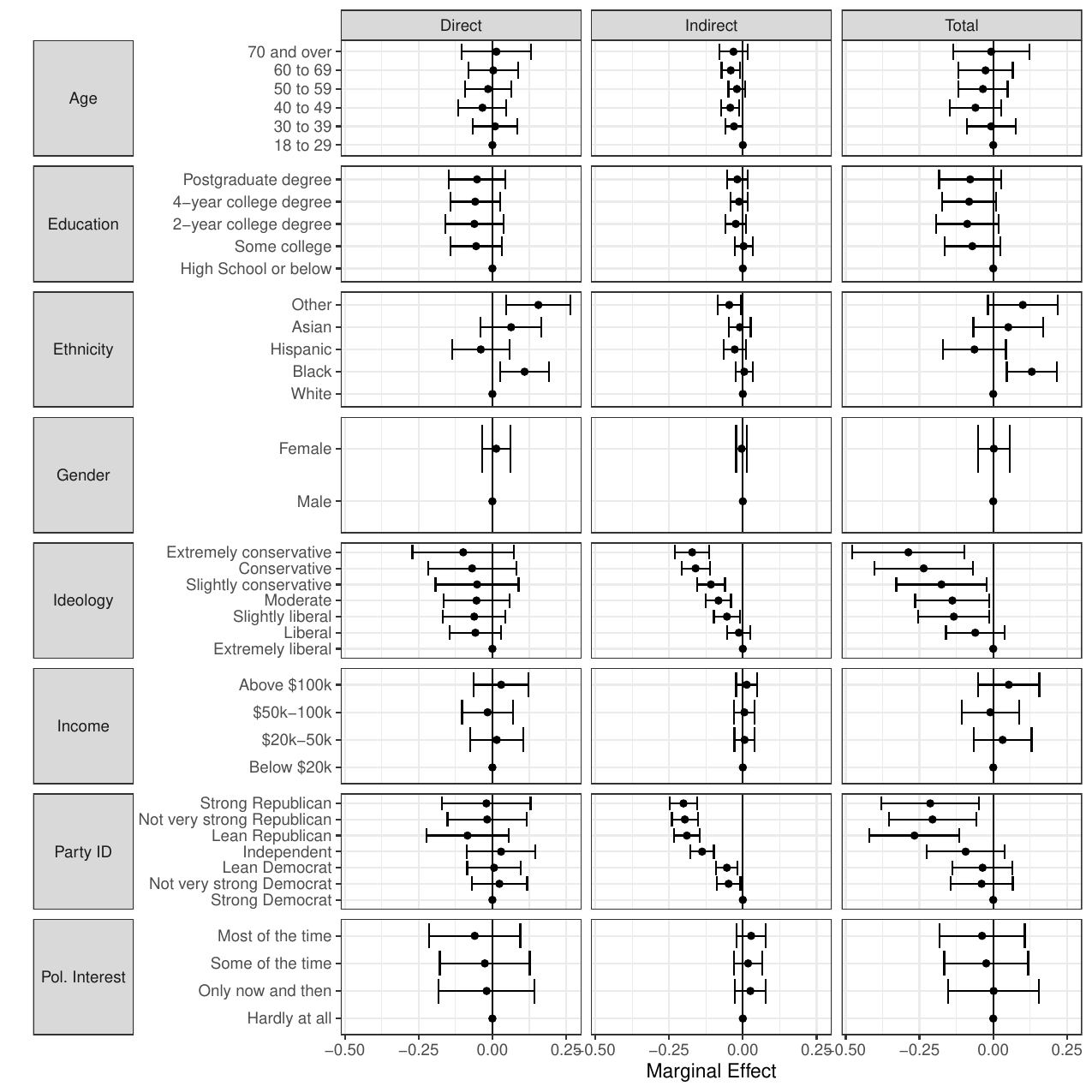}
\end{figure}

The results corroborate that the party-specific effects found in the pre-registered analysis appear robust. Controlling for a wide variety of demographic characteristics, respondents who are more conservative and more Republican both show increasingly negative indirect effects against Black candidates. Interestingly, Black respondents and respondents from other racial groups (e.g., multiracial respondents) show positive direct effects for Black candidates, controlling for many other respondent-level demographic characteristics.

\subsection{Testing the Assumptions}

Finally, as discussed in Section~\ref{sec:partial_test} and Appendix~\ref{app:sensitivity}, our separate $Y(T)$-experiment allows us to partially test whether there are violations of principal ignorability (Assumption~\ref{eq:A_PI}) and/or the manipulation exclusion restriction (Assumption~\ref{eq:A_ExcluRes}). To do this, we compute the estimates of $\E[Y_i(t,\bSi)]$---analogous to the marginal means from \cite{leeper2020measuring}---estimated using the mediation formula ($\alpha(t,t)$) and directly from the auxiliary $Y(T)$-experiment. Figure~\ref{fig:robust} presents this for Democratic and Republican respondents.

\begin{figure}[!htbp]
\caption{Testing Assumptions (Marginal Means)}\label{fig:robust}    \includegraphics[width=\textwidth]{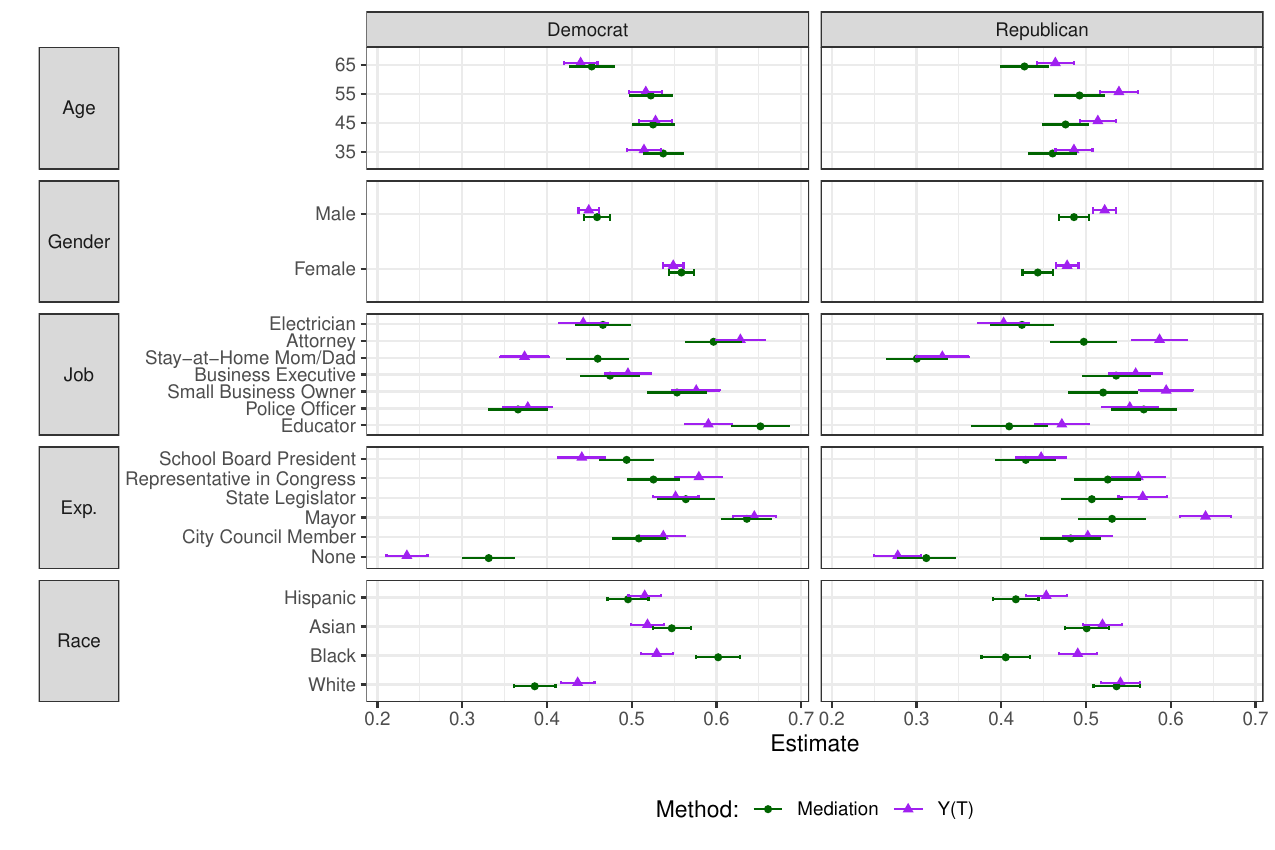}
\end{figure}

The results suggest that the assumptions vary in their plausibility for different choices of $t$ and types of respondents. While for many marginal means the estimates overlap, there are clear cases where differences exist between the $Y(T)$-experiment and the results using mediation. For Democratic respondents and for candidate race, the mediation analysis finds a lower marginal mean for white candidates (i.e., $\hat{\alpha}(t,t) < \hat{\phi}(t)$) and a larger marginal mean for Black candidates (i.e., $\hat{\alpha}(t,t) > \hat{\phi}(t)$). This suggests the mediation analysis over-estimates the total effect and, likely, the indirect effect. For political experience, the mediation estimates are systematically larger for candidates without political experience. This suggests the total effect found using the mediation analysis is an \emph{under-estimate}.

We can further apply the proposed sensitivity analysis from Section~\ref{sec:partial_test} and Appendix~\ref{app:sensitivity}, where the discrepancies between $\hat{\phi}(t)$ and $\hat{\alpha}(t,t)$ given $\bXi$ and $\bSi$, are used to roughly quantify the amount of bias that exists because of violations of the assumptions. Figure~\ref{fig:sens} shows the original results from Figure~\ref{fig:main_results} and those produced by the sensitivity analysis.

\begin{figure}[!htbp]
\caption{Sensitivity Analysis on Estimated Effects}\label{fig:sens} 
\includegraphics[width=\textwidth]{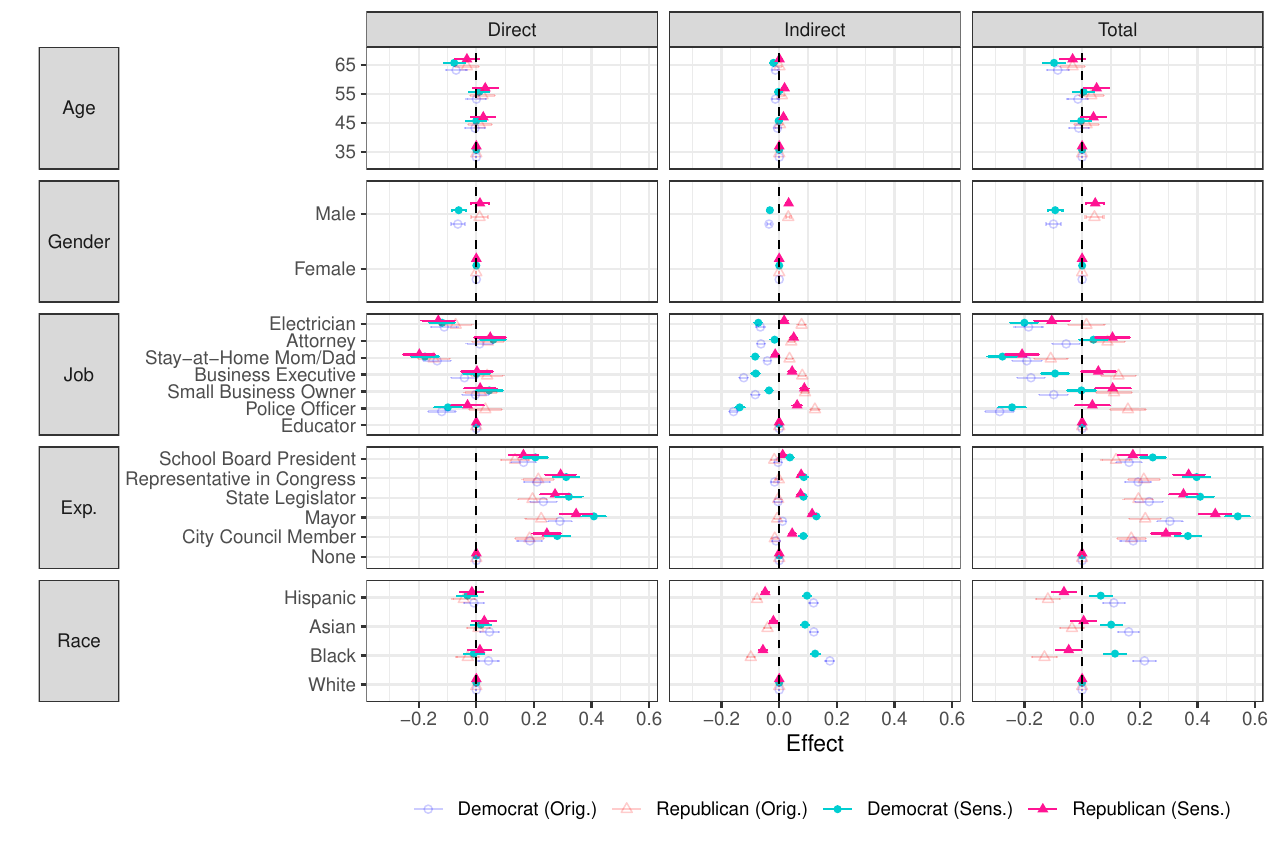}
\end{figure}

Broadly speaking, the results remain robust. The indirect effects for race and occupation are pulled towards zero after accounting for the sensitivity analysis but generally remain significant. Further, Democratic respondents still show both significant direct and indirect effects against police officers versus educators even after performing the sensitivity analysis. For political experience, the sensitivity analysis increases the estimated effects across all types (direct, indirect, and total) for both types of respondents, although the direct effect still remains considerably larger than the indirect effect.

\section{Conclusions}

Understanding why one factor in a conjoint experiment matters is essential to understanding causal mechanisms: Does it have a direct effect on the outcome or does its effect flow through its effects on another factor that, in turn, affects the outcome? Being clear about possible mediators that are randomized in the experiment allows for more precise interpretations of the causal estimands that are obtained by design from one's experiment. We show that, if one is willing to go beyond what is guaranteed by design alone, it is possible to decompose total effects into direct and indirect effects using the tools of causal mediation. Doing so can address important substantive questions such as understanding whether an effect is fully mediated through some other factor or recovering the total effect even when the mediator is randomized. 

In terms of experimental design, we suggest adding a simple additional experiment---asking respondents which mediator value they associate with the treatments---that, in conjunction with standard mediation assumptions allows for the identification of direct and indirect effects of some attribute in the conjoint experiment. Estimation can proceed using modern machine-learning techniques while having an estimator whose uncertainty can be easily quantified. Interpretable summaries of the underlying heterogeneous effects can be found using linear regression and presented alongside the average marginal direct and indirect effects. Further, some of the strong assumptions required by any mediation analysis can be partially tested if an auxiliary experiment is conducted. If no carry-over assumptions are credible in conjoint experiments across types of experiments, we also suggest other designs that could be implemented to relax these assumptions further. 

We applied our approach to an existing study about candidate choice; unlike other approaches that look at the eliminated effect between the total effect from a $Y(T)$-experiment and the controlled direct effect from a $Y(T,M)$-experiment, we find that the actual patterns of mediation are more complex but can be revealed with our experimental design (and corresponding assumptions). This suggests that the standard conjoint experiment often contains a large amount of theoretically interesting mediation and, with the design proposed in this paper, one can analyze that directly by a simple adjustment of the standard experimental protocol. Even if one does not wish to adopt the assumptions needed for causal mediation, analyzing the effects obtained from the $M(T)$-experiment provides important evidence about how treatment and a potential mediator are related.

\clearpage

\bibliography{paper_bib}
\bibliographystyle{apsr}

\clearpage

\appendix
\setcounter{page}{1}
\singlespacing

\section{Additional Illustration}\label{app:additional_illustration}

There are many scenarios in political science and marketing where the primary treatment variable affects beliefs about other characteristics (see, e.g., \citealt{Dafoe_Zhang_Caughey_2018,orr2020policy}). We focus on one example coming from a recent pair of studies published in the \emph{American Political Science Review} that use survey experiments to assess support for the democratic peace theory \citep{rathbun2025separate,tomz2026race}. The key stylized fact is that democracies rarely fight with each other. One explanation is that domestic publics are hostile to fight against other democracies because they are generally peaceful and because of a moral distaste for fighting against a democratic regime, a pattern supported by earlier survey experiments \citep{tomz2013public}. 

Recently, \citet{rathbun2025separate} put forward another explanation. When individuals are being told about a threat against another democracy, they really interpret it as a threat against a predominantly white country. In the framework of this paper, subjects are asked to report whether they would support military intervention after reading a vignette about a country who is close to developing nuclear weapons. The randomized treatment $T_i$ is whether the country is a democracy, the mediator $M_i$ is whether the country is predominately white, and $Y_i$ is whether the respondent would support American military intervention against the country's nuclear program. \cite{rathbun2025separate} perform both a $Y(T)$ and $Y(T,M)$-experiment and discuss various quantities including the eliminated effect, i.e., the difference between effects estimated in the $Y(T)$ and $Y(T,M)$-experiment. Their interpretation of the results is contested \citep{tomz2026race}; one area of disagreement is around the interpretation of the eliminated effect. As we also discuss in this paper, it is ambiguous whether an eliminated effect is associated with a non-zero indirect effect or is due to causal interaction. Our empirical example (Section~\ref{sec:empirical}) finds that it is highly variable whether an indirect effect accompanies an eliminated effect. Thus, while we do not wish to speak directly to this debate, it shows that estimating mediation quantities (e.g., indirect effects) is relevant to current empirical work; our paper provides an analytical and empirical framework (given the requisite assumptions) to do so.



\section{Proofs and Formalizations}

This section contains proofs or formalizations of claims made in the main text.

\subsection{Preliminary Assumptions}\label{app:SUTVA}

Section~\ref{sec:theory} notes that we make standard assumptions around SUTVA, consistency, and composition of the principal outcomes; these are standard in the mediation literature (e.g., \citealt{imai_2010_psychmeth,vanderweele_2014_epidem}), but we formalize them here in the binary treatment, binary mediator, and auxiliary treatment setting:
\begin{equation}
    \begin{split}
    Y_i &= \sum_{t, \bs} Y_i(t,\bs) \I{T_i = t, \bSi = \bs}; \quad M_i = \sum_{t,\bs} M_i(t,\bs) \I{T_i = t, \bSi = \bs} \\
    Y_i(0,\bs) &= Y_i(0, M_i(0),\bs); \quad Y_i(1,\bs) = Y_i(1, M_i(1),\bs).
    \end{split}
\end{equation}

\subsection{Interpreting Mediation Quantities}\label{app:example_direct}

The main text gives a thought experiment to understand the average marginal indirect effect $\delta(0)$. A similar example to understand the average marginal direct effect $\xi(0)$ is provided here. Imagine one fixed $\bSi = \bs$ and solicited the following information: (a) What party would you associate with a white candidate with other attributes $\bs$? The individual-level direct effect is obtained by comparing how the respondent would view a Black candidate ($T_i=1$) with the party provided in (a) versus a white candidate with the party provided in (a), i.e. $Y_i(1,M_i(0), \bs)-Y_i(0,M_i(0),\bs)$. Averaging over individuals and auxiliary covariates $\bs$ obtains the average marginal direct effect $\xi(0)$. This isolates the effect of race on candidate choice while holding constant the effect that race has on affecting beliefs about party.

In terms of controlled direct effects, the following illustration (directly adapted from \citealt{pear:01}) shows the difference between a controlled direct effect and a natural direct effect: Imagine there is a respondent who believes a candidate is a Democrat if and only if they are Black. Further, assume that they only wish to vote for Black candidates if they are Democrats.  A controlled direct effect exists: If we tell them the candidate is a Democrat, the treatment effect of race appears. However, a natural direct effect \emph{does not} exist: If we keep party fixed at the level that would be observed without treatment (i.e., Republican; the ``natural'' level), then there is no effect of providing the party label. 

\subsection{Alternative Mediation Quantities}\label{app:alt_mediation}

While it is common in social science to consider direct and indirect effects as discussed in the main text, \cite{vanderweele_2014_epidem} provides a different decomposition of the total effect into four parts: (a) controlled direct effect, (b) reference interaction, (c) mediated interaction, and (d) pure indirect effect. We restate his decomposition below, where $m^*$ is any value that $M$ may take on, adapting it for our notation and the idea of auxiliary treatments $\bs$. Note that this does not depend on any nested counterfactuals, although some terms can be re-expressed in that form.

\begin{equation}\label{eq:alt_mediation}
\begin{split}
    &Y_i(t,M_i(t),\bs) - Y_i(t',M_i(t'),\bs) = \underbrace{Y_i(t,m^*,\bs) -Y_i(t',m^*,\bs)}_{\textrm{Controlled Direct Effect}} + \\
    & \underbrace{\sum_{m} \left[Y_i(t,m,\bs)-Y_i(t',m,\bs) - Y_i(t,m^*,\bs) + Y_{i}(t',m^*,\bs)\right]\I{M_i(t',\bs)=m}}_{\textrm{Reference Interaction}} + \\
    &\underbrace{\sum_{m} \left[\begin{split}Y_i(t,m,\bs)-Y_i(t',m,\bs) \\ - Y_i(t,m^*,\bs) + Y_{i}(t',m^*,\bs)\end{split}\right]\left[\I{M_i(t,\bs)=m}-\I{M_i(t',\bs)=m}\right]}_{\textrm{Mediated Interaction}}+ \\
    &\underbrace{\sum_m\left[Y_i(t',m,\bs)-Y_i(t',m^*,\bs)\right]\left[\I{M_i(t,\bs)=m}-\I{M_i(t',\bs)=m}\right]}_{\textrm{Pure Indirect Effect}}
\end{split}
\end{equation}

The proof follows immediately from \citet[p. 759]{vanderweele_2014_epidem} where one notes that $(t,\bs)$ and $(t',\bs)$ defines a contrast of (high-dimensional) treatments. Marginalizing over units $i$ and auxiliary treatments $\bs$ gives a decomposition of the average marginal total effect into an AMCE (i.e., an average marginal controlled direct effects), an average marginal reference interaction, an average marginal mediated interaction, and an average marginal pure indirect effect. As follows immediately from \cite{vanderweele_2014_epidem}, the average marginal direct effect in the main text is the sum of the average marginal controlled direct effect and the average marginal reference interaction. The average marginal indirect effect is the sum of the latter two terms---the average marginal mediated interaction and the average marginal pure indirect effect. 

Under our principal ignorability assumptions, each of the four decomposed parts can be identified from the data and estimated in a doubly robust fashion. We focus primarily on the standard two-way decomposition and thus do not explicitly derive this decomposition, but \citet[p. 760]{vanderweele_2014_epidem} provides the identifiable estimand and analogous application of \cite{kennedy2024semiparametric}'s results would produce an influence function.

\subsection{Effect of 
``No Interactions''}\label{app:no_interaction}

As the text notes, without an $M(T)$-study, even principal ignorability (or sequential ignorability) is insufficient to identify a direct and indirect effect. This is because no information is learned about the treatment-mediator mapping. However, if one makes very strong assumptions, it is possible to learn the direct and indirect effect.

We illustrate this using the ``no interaction'' assumption from \cite{imai_2013_jrss}. In our notation, this states that, for every $i$ and every $\bs$,

\begin{equation}
    Y_i(t,m,\bs)-Y_i(t',m,\bs) = Y_i(t,m',\bs) - Y_i(t', m',\bs).
\end{equation}

If one makes this assumption, the average marginal indirect effect can be estimated as the difference between the average marginal total effect (i.e., the average marginal component effect; see Table~\ref{tab:QOI}) from the $Y(T)$-experiment and the controlled direct effect or average marginal component effect on $T$ from the $Y(T,M)$-experiment \citep{imai_2013_jrss,acharya_2018_polan}. The cost of doing so is assuming that the controlled direct effect equals the direct effect, i.e., assuming away the existence of a reference interaction \citep{vanderweele_2014_epidem}. In the notation of Appendix~\ref{app:alt_mediation}, above, the no interaction assumption eliminates the reference interaction from Equation~\ref{eq:alt_mediation} and thus the average marginal indirect effect is equivalent to the (average marginal) eliminated effect.

\subsection{Proof of Equation~\ref{eq:mediation_fmla}}\label{app:mediation_fmla}

We prove the mediation formula using Assumptions~\ref{eq:A_PI}-\ref{eq:A_RandMT}. This derivation holds for multi-valued treatments and mediators. We assume that for some distribution or weights on $\bSi$, call it $f(\bs)$, which we take to be independent of $t$, $M$, and $\bm{X}$, the goal is to obtain the (marginalized) average potential outcome

\begin{equation}
    \alpha_f(t,t') = \E[\E_{\bm{S}\sim f}[Y_i(t,M_i(t',\bm{S}),\bm{S})]] = \sum_{\bs} \E[Y_i(t, M_i(t',\bs),\bs)] f(\bm{S}=\bs)
\end{equation}

This can be expressed as a combination of identifiable pieces as follows:

\begin{equation}\label{eq:proof_id}
\begin{split}
\alpha_f(t,t') = &\sum_{\bs, \bx, \bm{m}} \left[\begin{split}&\E[Y_i(t, m,\bs) | \I{M_i(t', \bs) = m}, \bXi=\bx] \times  \\ &\mathrm{Pr}\left(M_i(t', \bs) = m | \bXi = \bx\right) \times\mathrm{Pr}\left(\bXi = \bx\right)f(\bm{S} = \bs)\end{split}\right] \\
&\sum_{\bs, \bx, \bm{m}} \left[\begin{split}&\E[Y_i(t, m,\bs) | \bXi]\times   \\ &\mathrm{Pr}\left(M_i(t', \bs) = m | \bXi = \bx \right) \times\mathrm{Pr}\left(\bXi = \bx\right) f\left(\bm{S} =\bs\right)\end{split}\right] \\
&\sum_{\bs, \bx, \bm{m}} \left[\begin{split}&\E[Y_i(t, m, \bs) | \bXi,A_i = 1] \times \\ &\mathrm{Pr}\left(M_i(t',\bs) = m | \bXi = \bx, A_i = 0 \right) \times \mathrm{Pr}\left(\bXi = \bx\right)\mathrm{Pr}(\bm{S}=\bs)\end{split}\right] \\
&\sum_{\bs, \bx, \bm{m}} \left[\begin{split}&\E[Y_i(t, m, \bs) | T_i = t, M_i = m, \bXi, \bSi = \bs, A_i = 1] \times\\& \mathrm{Pr}\left(M_i(t',\bs) = m | T_i = t', \bXi = \bx, \bSi = \bs, A_i = 0 \right) \times \mathrm{Pr}\left(\bXi = \bx\right)f(\bm{S}=\bs)\end{split}\right] \\
&\sum_{\bs, \bx, \bm{m}} \left[\begin{split}&\E[Y_i | T_i = t, M_i = m, \bXi, \bSi = \bs, A_i = 1] \times 
\\&\mathrm{Pr}\left(M_i = m | T_i = t', \bXi = \bx, \bSi = \bs, A_i = 0 \right) \times \mathrm{Pr}\left(\bXi = \bx\right)f(\bm{S}=\bs)
\end{split}\right]
\end{split}
\end{equation}

The first line follows by the law of iterated expectations. The second line follows principal ignorability (Assumption~\ref{eq:A_PI}). The third follows given the ignorability of the experiment (Assumption~\ref{eq:A_RandExp}) and, implicitly, by the manipulation exclusion restriction (Assumption~\ref{eq:A_ExcluRes}). The fourth follows given the ignorability of the treatment(s) and mediator (given it is randomly assigned) in the $Y(T,M)$ (\ref{eq:A_RandYTM}) and the ignorability of the treatment(s) in the $M(T)$-experiment (Assumption~\ref{eq:A_RandMT}). The fifth line follows by consistency (see Appendix~\ref{app:SUTVA}), i.e., where the observed $Y_i$ value in the $Y(T,M)$ experiment is the potential outcome corresponding to the observed $T_i$, $M_i$, and $\bSi$; a similar logic applies for the observed $M_i$ in the $M(T)$-experiment. Different choices of $f(\bm{S} = \bs)$ would allow for weighting the average potential outcomes to target different quantities of interest as in \cite{de2022improving}.

In practice, we use the empirical distribution of $\bSi$ for $f(\bm{S})$, denoting this as $\alpha(t,t')$ to match the main text, the following simplification holds noting that $\bXi$ and $\bSi$ are independent as $\bSi$ is randomly assigned.

\begin{equation}
\alpha(t,t') = \sum_{\bs, \bx, \bm{m}} \left[\begin{aligned}&\E[Y_i | T_i = t, M_i = m, \bXi, \bSi = \bs, A_i = 1] \times 
\\&\mathrm{Pr}\left(M_i = m | T_i = t', \bXi = \bx, \bSi = \bs, A_i = 0 \right) \times \mathrm{Pr}\left(\bXi = \bx, \bSi = \bs\right)
\end{aligned}\right]
\end{equation}

\subsection{Principal Strata and Direct and Indirect Effects}\label{app:decompose_PI}

This section provides an interpretation of the average marginal direct and indirect effects in terms of principal strata. We can define $\mathcal{G}_1$ as the principal strata where $M_i(1,\bs) = 1$ and $M_i(0,\bs) = 0$ and $\mathcal{G}_2$ as those where $M_i(1,\bs) = 0$ and $M_i(0,\bs) = 1$. These are known as associative strata \citep{fran:rubi:02,forastiere_2018_biomet}. Equation~\ref{eq:decompose_PI} decomposes the average marginal indirect effect $\delta(t)$ as follows

\begin{equation}\label{eq:decompose_PI}
\begin{split}
 \delta(t) &= \E[Y_i(t,M_i(1),\bSi)-Y_i(t,M_i(0),\bSi)] \\
 &= \sum_{g} \E[Y_i(t,M_i(1),\bSi)-Y_i(t,M_i(0),\bSi) | G_i = g] \times \mathrm{Pr}(G_i = g) \\  &= \left\{\begin{split}&\sum_{g \in \mathcal{G}_1} \E[Y_i(t,1,\bSi)-Y_i(t,0,\bSi) | G_i = g] \times \mathrm{Pr}(G_i = g) + \\ &\sum_{g \in \mathcal{G}_2} \E[Y_i(t,0,\bSi)-Y_i(t,1,\bSi) | G_i = g] \times \mathrm{Pr}(G_i = g)  \end{split}\right\}.
\end{split}  
\end{equation}

The final line shows only those strata in $\mathcal{G}_1$ and $\mathcal{G}_2$ contribute to $\delta(t)$, i.e. by their size $\mathrm{Pr}(G_i = g)$. Recalling that $\E[Y_i(t,m,\bSi)-Y_i(t,m',\bSi)|G_i = g]$ is an AMCE for $m$ vs $m'$, the average marginal indirect effect is a weighted combination of principal strata-specific AMCEs. Equation~\ref{eq:decompose_PI_DE} treats the average marginal direct effect.

\begin{equation}\label{eq:decompose_PI_DE}
\begin{split}
 \xi(t) &= \sum_g \E[Y_i(1,M_i(t),\bSi)-Y_i(0,M_i(t),\bSi) | G_i] \mathrm{Pr}(G_i = g)
\end{split}  
\end{equation}

This depends on all strata, unlike the average marginal indirect effect, but uses a strata-specific quantity---a combination of AMCE and average combined effects [ACE], i.e., in the strata where $M_i(t,\bs) \neq M_i(t', \bs)$. Note that the average marginal indirect effect considers the \emph{same} AMCE within $\mathcal{G}_1$ and $\mathcal{G}_2$, respectively.

\subsection{Influence Function}\label{app:influ}

This sub-section derives the influence function for $\alpha(t,t')$, focusing first on the version defined in terms of the empirical distribution. Recall that

\begin{equation}
\alpha(t,t') =\sum_{\bs, \bx, \bm{m}} \left[\begin{aligned}&\E[Y_i | T_i = t, M_i = m, \bXi = \bx, \bSi = \bs, A_i = 1] \times 
\\&\mathrm{Pr}\left(M_i = m | T_i = t', \bXi = \bx, \bSi = \bs, A_i = 0 \right) \times \mathrm{Pr}\left(\bXi = \bx,\bSi=\bs\right)
\end{aligned}\right].
\end{equation}

Using the tools in \cite{kennedy2024semiparametric}, the influence function can be expressed as

\begin{equation}\label{eq:if_empirical}
    \mathrm{IF}\left(\alpha(t,t')\right) =  \begin{aligned}
    &\frac{\I{A_i=1,T_i=t} \cdot e_{M_i}(t', x, \bSi)}{p^T_t(\bXi, \bSi, 1) \cdot p^M_{M_i}(t, \bXi, \bSi) \cdot p_a(\bXi,\bSi)} \left(Y_i - \mu_Y(t, M_i, \bXi, \bSi)\right) + \\
     &\frac{\I{T_i = t', A_i = 0}}{p^T_{t'}(\bXi,\bSi, 0)(1-p_a(\bXi, \bSi))} \left(\begin{aligned}&\mu_Y(t,M_i,\bXi,\bSi) \\ &- \sum_m \mu_Y(t,m,\bXi, \bSi)e_m(t', \bXi, \bSi)\end{aligned}\right) +  \\
     &\sum_{m} \mu_Y(t,m,\bXi,\bSi)e_m(t', \bXi, \bSi) 
    \\&- \alpha(t,t'). \end{aligned}
\end{equation} 

The proof follows by the results in \cite{kennedy2024semiparametric}, specifically that
\begin{equation}
    \begin{aligned}
    &\mathrm{IF}\left(\mu_Y(t, m, \bx, \bs)\right)= \frac{\I{T_i = t, M_i = m, \bXi = \bx, \bSi = \bs, A_i = 1}}{\mathrm{Pr}(T_i = t, M_i = m, \bXi = \bx, \bSi = \bs, A_i = 1)} \left(Y_i - \mu_Y(t, m, \bx, \bs)\right), \\
    &\mathrm{IF}\left(e_m(t, \bx, \bs)\right)= \frac{\I{T_i = t, \bXi = \bx, \bSi = \bs, A_i = 0}}{\mathrm{Pr}(T_i = t, \bXi = \bx, \bSi = \bs, A_i = 0)} \left(\I{M_i = m} - e_m(t, \bx, \bs)\right), \\
    &\mathrm{IF}\left(p(\bx,\bs)\right) = \I{\bXi = \bx, \bSi = \bs} - p(\bx,\bs), \\
    & \mathrm{IF}\left(a b\right) = \mathrm{IF}\left(a\right) \times b + a \times \mathrm{IF}\left(b\right).
    \end{aligned}
\end{equation}

We can use our design to simplify this further; for example, $p_a(\bx,\bs) = p_a(\bx)$ because $\bSi$ is randomized, $p^T_t(\bx,\bs,1) = p^T_t(\bs,1)$ as $(T_i, \bSi)$ are randomized, and $p^M_m(t, \bx, \bs)$ simplifies to $p^M_m(t, \bs)$ as $(M_i, T_i, \bSi)$ are randomized when $A_i = 1$. Our proposed influence function thus nearly exactly coincides with \cite{farbmacher_2022_econmet} once we adjust for the propensity score of being in each experiment, i.e., $A_i \in \{0,1\}$ and account for the missing data in each experiment recall that if $A_i = 0$, $Y_i$ is missing and if $A_i = 1$, then $M_i(T_i)$ is missing. Equation~\ref{eq:if_empirical} also accounts for the fact that $M_i$ is fixed (versus set at its natural value) if $A_i = 1$. Given this influence function, the estimator proposed in the main text follows as either a one-step estimator or by solving the implied estimation equation for $\alpha(t,t')$ \citep{kennedy2024semiparametric}.

We also consider the case where some other distribution $f$  over $\bSi$ is used. Given that $f$ is assumed to be independent of the distribution of $\bm{X}$, this example mirrors the case of stochastic intervention \citep{kennedy2024semiparametric}. The influence function for $\alpha_f(t,t')$ is shown below, where $\pi(\bm{S})$ denotes the randomization distribution of $\bSi$ fixed in the experiment.

\begin{equation}\label{eq:if_marginalized}
\mathrm{IF}\left(\alpha_f(t,t')\right) = \begin{aligned}
    &\frac{\I{A_i=1,T_i=t} \cdot e_{M_i}(t', x, \bSi) f(\bSi)}{p^T_t(\bXi, \bSi, 1) \cdot p^M_{M_i}(t, \bXi, \bSi) \cdot p_a(\bXi,\bSi)\pi(\bSi)} \left(Y_i - \mu_Y(t, M_i, \bXi, \bSi)\right) +\\
     &\frac{\I{T_i = t', A_i = 0} f(\bSi)}{p^T_{t'}(\bXi,\bSi, 0)(1-p_a(\bXi, \bSi))\pi(\bSi)} \left(\begin{aligned}&\mu_Y(t,M_i,\bXi,\bSi) \\&-\sum_m \mu_Y(t,m,\bXi, \bSi)e_m(t', \bXi, \bSi)\end{aligned}\right) + \\
     & \E_{\bm{S} \sim f}\left[\sum_{m} \mu_Y(t,m,\bXi,\bm{S})e_m(t', \bXi, \bm{S})\right]\\
     &- \alpha_f(t,t')
    \end{aligned}
\end{equation} 

This aligns nearly exactly with the empirical distribution influence function with two critical differences; first, note that there is an adjustment in the doubly robust terms $f(\bSi)/\pi(\bSi)$ to re-weight those terms to match $f$---instead of $\pi$. Second, note that the third term now involves an expectation over $\bm{S}$ versus merely plugging in the observed value. This is, indeed, what would be required to estimate the $\alpha_f(t,t')$ even if a non-robust method was used. In practice, $\bm{S}$ is very high-dimensional so this integral is infeasible to do exactly. Thus, we suggest doing Monte Carlo integration where some number of samples are taken from $f$ and then used to approximate this third term. Note that as we are averaging this across all observations in the estimation stage, i.e., empirically averaging across $\bXi$, the number of samples likely need not be so large to obtain a stable estimate.

It is useful to compare $\alpha(t,t')$ and $\alpha_\pi(t,t')$, i.e., using the empirical distribution versus the second approach where $f = \pi$ (i.e., the randomization distribution). In this case, $f(\bs)/\pi(\bs) = 1$ and thus the only difference between this influence function and the one above is that the third term is computed by averaging over $\bs$ versus using the observed value. As $\bSi$ is randomly generated, however, this is equivalent to doing Monte Carlo integration with a single sample. 

Put another way, the influence function that we use with the empirical distribution (Equation~\ref{eq:if_empirical}) is derived assuming the distribution of $\mathrm{Pr}(\bXi=\bm{x},\bSi=\bm{s})$ is unknown. If we use the fact that $\bXi$ and $\bSi$ are independent and the distribution of $\bSi$ is known, we can obtain the second influence function (Equation~\ref{eq:if_marginalized}) that is likely more efficient but requires Monte Carlo integration to deal with the third term. For simplicity, we thus prefer the first option in our analyses.

\subsection{Crossover Design}\label{app:crossover}

Principal ignorability (Assumption~\ref{eq:A_PI}) is often viewed as a strong assumption. We note that this could be avoided using a crossover design, adapting the one proposed in \cite{imai_2013_jrss} to conjoint experiments---if one is willing to assume consistency and no carry-over effects. While it is outside of the scope of this paper to fully pursue this, one might mimic an idealized experiment (see Section~\ref{sec:theory} and Appendix~\ref{app:example_direct}) where one carefully designs the repeated tasks a respondent performs. If one assumes that repeatedly performing tasks from the $M(T)$ and $Y(T,M)$-experiments in close proximity does not change how they respond (i.e., a strong ``no carry-over'' assumption), then this can identify the relevant indirect and direct effects without requiring the assumption of principal ignorability. Various things could make this more plausible, e.g., re-interviewing subjects weeks apart, although those often raise the cost of the experiment. 

If one desires to do this in a single study, the assumption may also be rather strong. However, recent work on conjoint experiments suggests that they are sufficiently cognitively demanding that asking respondents to, say, rate the same pair of profiles \emph{again}---in rather quick succession after they initially did so---does not seem to elicit any negative feedback \citep{clayton2025correcting}. Thus, an interesting area for future research would be to extend our approach to use a no carry-over design, although it would be important to have a way to validate the reasonableness of the no carry-over assumption---perhaps extending ideas from \citet[p. 17]{imai_2013_jrss}.

\subsection{Multiple Mediators}\label{app:multiple_mediators}

Assume that there were two mediators of interest, say party ($M_1$; Democrat, Republican, and Independent) and policy ($M_2$; the federal government should make it [easier / harder] to purchase guns; e.g., \citealt{orr2020policy}). Formally treating the case of multiple mediators is challenging and outside of scope for this paper. Assuming both mediators are low-dimensional and categorical, one can create a single combined mediator (i.e., of all unique combinations of $M_1$ and $M_2$) and use our framework. It would require the following (minimal) changes to the proposed design.

\begin{itemize}
    \item $Y(T,M)$-Experiment: Randomize both $M_1$ and $M_2$ and $(M_1, M_2)$ to respondents.
    \item $M(T)$-Experiment: Solicit information on how the respondents think that $T$ (and $\bm{S}$) map onto $M_1$ and $M_2$. This can be done in various ways (e.g., asking separate questions for $M_1$ and $M_2$), but it is essential that the \emph{joint} $(M_1, M_2)$ information is elicited each respondent in the $M(T)$-Experiment.
    \item Analysis: Our method and estimation using influence functions does not restrict the number of (discrete) levels the mediator can have. Thus, the only change needed is to estimate $\mathrm{Pr}(M_{1,i} =m _1,M_{2,i} = m_2 | T_i, \bSi, \bXi)$ from the $M(T)$-experiment, i.e., the \emph{joint} distribution of the multiple mediators given $T_i, \bSi, \bXi$.    
\end{itemize}

\section{Estimation Using Linear Regression}\label{app:regestimate}

This appendix shows linear regression with appropriately clustered standard errors can estimate the average marginal direct, indirect, and total effects. All of our primary quantities of interest can be expressed as $F(\bm{z}) = \E[\bm{z}^T\bm{\psi}_i]$ where $\bm{\psi}_i$ is a $|\mathcal{T}|\times |\mathcal{T}|$ length vector collecting all $\{\psi_i(t,t')\}$. For example, $\alpha(t,t') = F(\bm{z}_1)$ where $\bm{z}_1$ has a `1' in the position of $(t,t')$ and zero otherwise. $\xi(t^*) = \alpha(t,t^*) - \alpha(t',t^*)$ can be found by a $\bm{z}$ with `1' in the $(t,t^*)$-position and `-1' in the $(t',t^*)$-position. $\bar{\xi}$ can be found similarly, e.g., $1/|\mathcal{T}|$ in the $(t,t^*)$-position for all $t^* \in \mathcal{T}$ and $-1/|\mathcal{T}|$ in the $(t',t^*)$-position for all $t^* \in \mathcal{T}$.

Thus, this section considers estimating $F(\bm{z})$ for an arbitrary $\bm{z}$. To do this, we define our estimator, noting that it is consistent and asymptotically normal as it is a linear combination of influence functions and cross-fitting to obtain $\hat{\bm{\psi}}_i$ \citep{farbmacher_2022_econmet,kennedy2024semiparametric}.

\begin{equation}\label{eq:fz}
\begin{split}
    &\hat{F}(\bm{z}) = \frac{1}{N}\sum_{i=1}^N \bm{z}^T\hat{\bm{\psi}}_i; \quad \hat{\sigma}^2(\bm{z}) = \frac{1}{N} \sum_{i=1}^N \left(\bm{z}^T\hat{\bm{\psi}}_i - \hat{F}(\bm{z})\right)^2 \\
    &\frac{\sqrt{N}\left(\hat{F}(\bm{z}) - F(\bm{z})\right)}{\sqrt{\hat{\sigma}^2(\bm{z})}} \to^d N\left(0,1\right)
\end{split}
\end{equation}

This exactly matches our definition of $\hat{\alpha}(t,t')$ and $\hat{\delta}(t)$ in the main text for specific choices of $\bm{z}$. Theorem~\ref{thm:estimation} provides three equivalent ways of obtaining $\hat{F}(\bm{z})$ and the accompanying standard error. The theorem considers a more general case where a linear regression with covariates $\bm{w}_i$ are used to predict $\bm{z}^T\bm{\psi}_i$ and obtain the usual consistent and asymptotically normal estimator of the population linear regression coefficient $\bm{\beta} =\mathbb{E}[\bm{w}_i \bm{w}_i^T]^{-1} \mathbb{E}[\bm{w}_i\left(\bm{z}^T\bm{\psi}_i\right)]$. This handles heterogeneous effect estimation in the following section; $\bm{w}_i = 1$ recovers the average effect in Equation~\ref{eq:fz}.

\begin{theorem}\label{thm:estimation}
    Assume a full-rank design matrix $\bm{W} \in \mathbb{R}^{N \times p}$ and a matrix of outcomes $\bm{Y} \in \mathbb{R}^{N \times q}$. For any vector $\bm{z}$, consider three estimators of the population linear regression coefficient $\bm{\beta}$ with outcome $\bm{y}_i^T\bm{z}$ and regressors $\bm{w}_i$:

    \begin{itemize}
        \item Approach 1: Define $\tilde{\bm{y}} = \bm{Y}\bm{z}$ and $\hat{\bm{\beta}}_1 = \left(\bm{W}^T\bm{W}\right)^{-1}\bm{W}^T\tilde{\bm{y}}$ with a HC0-robust covariance matrix, i.e.
        $$\widehat{\mathrm{Var}}(\hat{\bm{\beta}}_1) = \left(\frac{1}{N} ~\bm{W}^T\bm{W}\right)^{-1}\left(\frac{1}{N}\sum_{i=1}^N \left(\tilde{y}_i - \bm{z}_i^T\hat{\bm{\beta}}_1\right)^2 \bm{z}_i\bm{z}_i^T\right)\left(\frac{1}{N} ~\bm{W}^T\bm{W}\right)^{-1}.$$
        \item Approach 2: Perform $q$ separate regressions for each column of $\bm{Y}$, i.e. $\hat{\bm{\beta}}_q = \left(\bm{W}^T\bm{W}\right)^{-1}\bm{W}^T\bm{y}_q$. The (co)-variance matrix $\widehat{\mathrm{Cov}}\left(\hat{\bm{\beta}}_q, \hat{\bm{\beta}}_{q'}\right)$ is defined as
        $$\left(\frac{1}{N}~ \bm{W}^T\bm{W}\right)^{-1} \left(\frac{1}{N}\sum_{i=1}^N \left(y_{i,q} - \bm{w}_i^T\hat{\bm{\beta}}_q\right)\left(y_{i,q'}-\bm{w}_i^T\hat{\bm{\beta}}_{q'}\right)\bm{w}_i\bm{w}_i^T\right)\left(\frac{1}{N}~\bm{W}^T\bm{W}\right)^{-1}$$

        Then, define $\hat{\bm{B}} = \left[\hat{\bm{\beta}}_1|\hat{\bm{\beta}}_2 | \cdots |\hat{\bm{\beta}}_q\right]$ and $\hat{\bm{\beta}}_2 = \hat{\bm{B}}\bm{z} = \sum_{j=1}^q \hat{\bm{\beta}}_j z_j$.
        \item Approach 3: Perform a single linear regression, vectorizing $\bm{Y}$ column-wise, i.e. $\check{\bm{y}}^T = \left[\bm{y}_1^T, \bm{y}_2^T, \cdots, \bm{y}_q^T\right]^T$. Define the corresponding design matrix $\check{\bm{W}} = \bm{I} \otimes \bm{W}$, i.e. a block diagonal matrix with $q$ blocks each consisting of $\bm{W}$. Define $\hat{\check{\bm{\beta}}} = \left(\check{\bm{W}}^T \check{\bm{W}}\right)^{-1}\check{\bm{W}}^T \check{\bm{y}}$. The estimate of the variance matrix uses cluster-robust HC0 errors where the errors are clustered by observation $i$. Formally,

        $$\widehat{\mathrm{Var}}(\hat{\check{\bm{\beta}}})=\left(\frac{1}{N} \check{\bm{W}}^T\check{\bm{W}}\right)^{-1}\left(\frac{1}{N}\sum_{i=1}^N \left(\bm{I}_q \otimes \bm{w}_i\right)\hat{\bm{u}}_i  \hat{\bm{u}}_i^T \left(\bm{I}_q \otimes \bm{w}_i^T\right)\right)\left(\frac{1}{N} \check{\bm{W}}^T\check{\bm{W}}\right)^{-1},$$

        where $\bm{y}_i$ are the $q$ observations associated with unit $i$ and the vector of residuals $\hat{\bm{u}}_i = \bm{y}_i -\left(\bm{I}_q \otimes \bm{w}_i^T\right)\hat{\check{\bm{\beta}}}$. Define $\hat{\bm{\beta}}_3 = \left(\bm{z}^T \otimes \bm{I}_p\right) \hat{\check{\bm{\beta}}}$.
        
    \end{itemize}

    All three approaches are equivalent, i.e. $\hat{\bm{\beta}}_1 = \hat{\bm{\beta}}_2 = \hat{\bm{\beta}}_3$ as well as $\widehat{\mathrm{Var}}(\bm{\beta}_1) = \widehat{\mathrm{Var}}(\bm{\beta}_2) = \widehat{\mathrm{Var}}(\bm{\beta}_3)$.
\end{theorem}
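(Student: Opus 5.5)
The proof is pure linear algebra, organized around the common bread matrix $\bm{Q} = (\frac{1}{N}\bm{W}^T\bm{W})^{-1}$ and the hat operator $\bm{H} = (\bm{W}^T\bm{W})^{-1}\bm{W}^T$. The meat term in Approach~1 is written with $\bm{z}_i$, which we read as a typo for $\bm{w}_i$ (the regressors), matching the other two approaches. We first show equality of the point estimates, then of the variances.

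\textbf{Point estimates.} For Approach~2, $\hat{\bm{B}} = \bm{H}\bm{Y}$, so
\[
\hat{\bm{B}}\bm{z} = \bm{H}\bm{Y}\bm{z} = \bm{H}\tilde{\bm{y}} = \hat{\bm{\beta}}_1 .
\]
For Approach~3, $\check{\bm{W}}^T\check{\bm{W}} = \bm{I}_q \otimes \bm{W}^T\bm{W}$, which is invertible because $\bm{W}$ has full rank. Hence $\hat{\check{\bm{\beta}}} = (\bm{I}_q\otimes \bm{H})\,\mathrm{vec}(\bm{Y}) = \mathrm{vec}(\bm{H}\bm{Y}) = \mathrm{vec}(\hat{\bm{B}})$, using $\mathrm{vec}(\bm{A}\bm{X}\bm{B}) = (\bm{B}^T\otimes \bm{A})\mathrm{vec}(\bm{X})$ with $\bm{B}=\bm{I}_q$. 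Applying the same identity once more,
\[
(\bm{z}^T\otimes \bm{I}_p)\,\mathrm{vec}(\hat{\bm{B}}) = \mathrm{vec}(\hat{\bm{B}}\bm{z}) = \hat{\bm{B}}\bm{z},
\]
so all three point estimates coincide.

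\textbf{Residuals.} Let $\hat{\bm{e}}_i$ be the $q$-vector of residuals from Approach~2, with entries $y_{i,q} - \bm{w}_i^T\hat{\bm{\beta}}_q$. Since $\hat{\check{\bm{\beta}}}$ equals the stacked $\hat{\bm{\beta}}_q$, we have $\hat{\bm{u}}_i = \hat{\bm{e}}_i$. Moreover, the Approach~1 residual is $\tilde{y}_i - \bm{w}_i^T\hat{\bm{\beta}}_1 = \bm{z}^T\hat{\bm{e}}_i$ by linearity.

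\textbf{Variances.} This is the step needing the most care, mainly to get the Kronecker ordering right; the block bookkeeping in Approach~3 is the main obstacle. Approach~1's variance is
\[
\bm{Q}\Big(\tfrac{1}{N}\textstyle\sum_i (\bm{z}^T\hat{\bm{e}}_i)^2\, \bm{w}_i\bm{w}_i^T\Big)\bm{Q}.
\]
Approach~2's variance of $\hat{\bm{B}}\bm{z} = \sum_j z_j\hat{\bm{\beta}}_j$ is the bilinear combination
\[
\sum_{j,j'} z_j z_{j'}\,\widehat{\mathrm{Cov}}(\hat{\bm{\beta}}_j,\hat{\bm{\beta}}_{j'}) = \bm{Q}\Big(\tfrac{1}{N}\textstyle\sum_i \big(\sum_{j,j'} z_j z_{j'}\hat e_{ij}\hat e_{ij'}\big)\,\bm{w}_i\bm{w}_i^T\Big)\bm{Q},
\]
and the inner double sum equals $(\bm{z}^T\hat{\bm{e}}_i)^2$, which matches Approach~1.

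For Approach~3, the bread is $(\bm{I}_q\otimes \frac{1}{N}\bm{W}^T\bm{W})^{-1} = \bm{I}_q\otimes\bm{Q}$. The meat is $\frac{1}{N}\sum_i (\hat{\bm{u}}_i\hat{\bm{u}}_i^T)\otimes(\bm{w}_i\bm{w}_i^T)$, using the mixed-product rule $(\bm{I}_q\otimes\bm{w}_i)\hat{\bm{u}}_i = \hat{\bm{u}}_i\otimes \bm{w}_i$. Two consequences follow:
\begin{itemize}
\item The $(j,j')$ block of the resulting sandwich is exactly $\widehat{\mathrm{Cov}}(\hat{\bm{\beta}}_j,\hat{\bm{\beta}}_{j'})$ from Approach~2.
\item Pre- and post-multiplying by $(\bm{z}^T\otimes\bm{I}_p)$ and its transpose, again by the mixed-product rule, collapses each summand to $(\bm{z}^T\hat{\bm{u}}_i\hat{\bm{u}}_i^T\bm{z})\,\bm{Q}\bm{w}_i\bm{w}_i^T\bm{Q}$.
\end{itemize}
This equals the Approach~1 expression, which completes the proof.
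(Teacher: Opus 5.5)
Your proposal is correct and follows essentially the same route as the paper: $\hat{\bm{B}} = (\bm{W}^T\bm{W})^{-1}\bm{W}^T\bm{Y}$ combined with the $\mathrm{vec}$--Kronecker identity for the point estimates, the bilinear expansion $\sum_{j,j'} z_j z_{j'}\widehat{\mathrm{Cov}}(\hat{\bm{\beta}}_j,\hat{\bm{\beta}}_{j'})$ for Approach~2, and the mixed-product rule $(\bm{I}_q\otimes\bm{w}_i)\hat{\bm{u}}_i = \hat{\bm{u}}_i\otimes\bm{w}_i$ to collapse the Approach~3 sandwich to weights $(\bm{z}^T\hat{\bm{u}}_i)^2$. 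Your reading of $\bm{z}_i$ as a typo for $\bm{w}_i$ in the Approach~1 meat matches how the paper's own proof concludes.
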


\emph{Proof:} Consider first $\hat{\bm{\beta}}_2$ and its variance matrix. $\hat{\bm{\beta}}_1 = \hat{\bm{\beta}}_2$ as $\hat{\bm{B}}=\left(\bm{W}^T\bm{W}\right)^{-1}\bm{W}^T\bm{Y}$ and $\hat{\bm{\beta}}_2 = \hat{\bm{B}}\bm{z} = \left(\bm{W}^T\bm{W}\right)^{-1}\bm{W}^T\bm{Y}\bm{z} = \left(\bm{W}^T\bm{W}\right)^{-1}\bm{W}^T\tilde{\bm{y}} =\hat{\bm{\beta}}_1$. The equivalence of the estimated variance matrix is shown below,

\begin{align*}
\widehat{\mathrm{Var}}\left(\hat{\bm{\beta}}_2\right) &= \sum_{j=1}^q \sum_{j'=1}^q z_j \cdot z_{j'} \cdot \widehat{\mathrm{Cov}}\left(\hat{\bm{\beta}}_q, \hat{\bm{\beta}}_{q'}\right) \\
&=\left(\frac{1}{N}~ \bm{W}^T\bm{W}\right)^{-1} \left(\frac{1}{N}\sum_{i=1}^N \left[\bm{z}^T \left(\bm{y}_i - \hat{\bm{B}}^T\bx_i\right)\left(\bm{y}_i -\hat{\bm{B}}^T \bx_i\right)^T\bm{z}\right]\bm{w}_i\bm{w}_i^T\right)\left(\frac{1}{N}~\bm{W}^T\bm{W}\right)^{-1}\\
&=\left(\frac{1}{N}~ \bm{W}^T\bm{W}\right)^{-1} \left(\frac{1}{N}\sum_{i=1}^N \left(\tilde{\bm{y}}_i - \hat{\bm{\beta}}_1^T \bm{w}_i\right)^2\bm{w}_i\bm{w}_i^T\right)\left(\frac{1}{N}~\bm{W}^T\bm{W}\right)^{-1}.
\end{align*}

Next, consider $\hat{\bm{\beta}}_3$. Its equivalence to $\hat{\bm{\beta}}_1$ is shown below, noting that $\mathrm{vec}^{-1}(\check{\bm{y}}) = \bm{Y}$ by construction and $\mathrm{vec}(\bm{A}\bm{B}\bm{C}) = \left(\bm{C}^T \otimes \bm{A}\right)\mathrm{vec}(\bm{B})$:
\begin{align*}
\hat{\bm{\beta}}_3 &= \left(\bm{z}^T \otimes \bm{I}_p\right) \left(\bm{I} \otimes \left[\bm{W}^T\bm{W}\right]^{-1}\bm{W}^T\right)\check{\bm{y}} \\
&=\left(\bm{z}^T \otimes \left[\bm{W}^T\bm{W}\right]^{-1}\bm{W}^T\right)\check{\bm{y}} \\
&= \mathrm{vec}\left(\left[\bm{W}^T\bm{W}\right]^{-1}\bm{W}^T\bm{Y}\bm{z} \right) = \hat{\bm{\beta}}_1.
\end{align*}

Regarding the estimated variance matrix, we first note the meat can be simplified using the mixed product properties of Kronecker product, noting that for a vector $\bm{a}$, $\bm{a} = \bm{a} \otimes 1$:

$$\frac{1}{N}\sum_{i=1}^N \left(\bm{I}_q \otimes \bm{w}_i\right)  \hat{\bm{u}}_i \hat{\bm{u}}_i^T \left(\bm{I}_q \otimes \bm{w}_i^T\right) = \frac{1}{N}\sum_{i=1}^N \left(\hat{\bm{u}}_i \otimes \bm{w}_i\right)\left(\hat{\bm{u}}_i \otimes \bm{w}_i\right)^T.$$

Next, recall that $\check{\bm{W}}^T\check{\bm{W}} = \bm{I}_q \otimes \bm{W}^T\bm{W}$ and define $\bm{A} = \left(\frac{1}{N}\left[\bm{W}^T\bm{W}\right]\right)^{-1}$. With some re-arrangement
\begin{align*}
    \widehat{\mathrm{Cov}}(\hat{\bm{\beta}}_3) &=\left(\bm{z}^T \otimes \bm{A}\right) \left(\frac{1}{N}\sum_{i=1}^N \left(\hat{\bm{u}}_i \otimes \bm{w}_i\right)\left(\hat{\bm{u}}_i \otimes \bm{w}_i\right)^T\right)\left(\bm{z} \otimes \bm{A}\right) \\
    &= 
    \frac{1}{N}\sum_{i=1}^N \left(\bm{z}^T\hat{\bm{u}}_i \otimes \bm{A}\bm{w}_i\right)\left(\bm{z}^T\hat{\bm{u}}_i \otimes \bm{A}\bm{w}_i\right)^T \\
    &=\bm{A}\left(\frac{1}{N}\sum_{i=1}^N \left(\bm{z}^T\hat{\bm{u}}_i\right)^2 \bm{w}_i\bm{w}_i^T\right)\bm{A} \\&=\left(\frac{1}{N}\bm{W}^T\bm{W}\right)^{-1}\left(\frac{1}{N}\sum_{i=1}^N \left(\tilde{\bm{y}}_i - \hat{\bm{\beta}}_1^T\bm{w}_i\right)^2 \bm{w}_i\bm{w}_i^T\right)\left(\frac{1}{N}\bm{W}^T\bm{W}\right)^{-1}.
\end{align*}
The last line follows by noting that $\bm{z}^T\hat{\bm{u}}_i = \tilde{\bm{y}}_i - \bm{z}^T\left(\bm{I} \otimes \bm{w}_i^T\right)\hat{\check{\bm{\beta}}}$, $\left(\bm{I} \otimes \bm{w}_i^T\right)\hat{\check{\bm{\beta}}} = \check{\bm{B}}^T \bm{w}_i$ by simple re-arrangement, where $\check{\bm{B}} = \mathrm{vec}^{-1}\left(\hat{\check{\bm{\beta}}}\right)$ and $\hat{\bm{\beta}}_3 = \check{\bm{B}} \bm{z} b$. Thus, $\bm{z}^T\check{\bm{B}}^T\bm{w}_i = \hat{\bm{\beta}}_1^T\bm{w}_i$ by the equivalence of $\hat{\bm{\beta}}_3 = \hat{\bm{\beta}}_1$. \qed

Theorem~\ref{thm:estimation} is useful in noting the equivalence of the following approaches: (i) estimating a regression on the transformed outcome $\bm{z}^T\bm{y}_i$ directly, (ii) using separate regressions and then combining the $\hat{\bm{\beta}}_q$ directly, or (iii) using a single regression where, in effect, the covariates are interacted with indicator variables for the column $q$ from which the vectorized data corresponds. In practice, if one wishes to use many different $\bm{z}$, approaches (ii) and (iii) are superior as the main cost of fitting the regression is performed only once. Further, note that if some elements of $\bm{z}$ are zero, the corresponding columns of $\bm{Y}$ can be dropped from the analysis and all three approaches remain equivalent.

Given Theorem~\ref{thm:estimation}, estimating $\hat{F}(\bm{z})$ is a specific case; Corollary~\ref{coro:est_avg} states this formally:

\begin{coro}\label{coro:est_avg}
$\hat{F}(\bm{z})$ (Equation~\ref{eq:fz}) can be estimated using any of the three approaches in Theorem~\ref{thm:estimation}, where $\bm{W} = \bm{1}$, i.e. an intercept-only regression.    
\end{coro}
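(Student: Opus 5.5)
The plan is to specialize Theorem~\ref{thm:estimation} to $\bm{W} = \bm{1}_N$ and to take as outcome matrix $\bm{Y}$ the $N \times |\mathcal{T}|^2$ matrix whose $i$-th row is $\hat{\bm{\psi}}_i^T$. Then we check that each of the three approaches returns exactly $\hat{F}(\bm{z})$ and $\hat{\sigma}^2(\bm{z})$ from Equation~\ref{eq:fz}. Since the theorem already shows the three approaches coincide in both point estimate and variance, it suffices to verify the match for Approach 1 only. The equality for Approaches 2 and 3 then follows immediately from Theorem~\ref{thm:estimation}. The theorem's full-rank condition is trivially met, since a column of ones has rank one whenever $N \geq 1$.

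\textbf{Point estimate.} With $\bm{W} = \bm{1}$, we have $\bm{W}^T\bm{W} = N$ and $\bm{W}^T\tilde{\bm{y}} = \sum_i \bm{z}^T\hat{\bm{\psi}}_i$. Hence
$$\hat{\bm{\beta}}_1 = \frac{1}{N}\sum_{i=1}^N \bm{z}^T\hat{\bm{\psi}}_i = \hat{F}(\bm{z}).$$

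\textbf{Variance.} The HC0 sandwich has bread $\left(\tfrac{1}{N}\bm{W}^T\bm{W}\right)^{-1} = 1$. The meat is $\tfrac{1}{N}\sum_i (\tilde{y}_i - \hat{\beta}_1)^2 w_i w_i^T$, and since every regressor $w_i$ equals $1$, this reduces to
$$\frac{1}{N}\sum_{i=1}^N \left(\bm{z}^T\hat{\bm{\psi}}_i - \hat{F}(\bm{z})\right)^2 = \hat{\sigma}^2(\bm{z}).$$
Equation~\ref{eq:fz} defines $\hat{\sigma}^2(\bm{z})$ as the asymptotic variance of $\sqrt{N}(\hat{F}(\bm{z}) - F(\bm{z}))$. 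With the normalizations in the theorem, the sandwich above is on that same $\sqrt{N}$ scale. So the standard errors agree once the same $\sqrt{N}$ scaling is applied to both, and in particular $\hat{\sigma}^2(\bm{z})/N$ is the reported variance of $\hat{F}(\bm{z})$.

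\textbf{Expected obstacle.} The only delicate point is notational rather than substantive. The HC0 formula in Approach 1 writes the meat with $\bm{z}_i\bm{z}_i^T$ where $\bm{w}_i\bm{w}_i^T$ is intended, consistent with Approach 2 and with the proof of the theorem. I would read it as $\bm{w}_i$, which here equals the scalar $1$. I would also state explicitly that the $1/N$ versus $\sqrt{N}$ normalization conventions in Theorem~\ref{thm:estimation} and Equation~\ref{eq:fz} match. Beyond these remarks, the corollary is a direct substitution.
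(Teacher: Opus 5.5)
Your proposal is correct and matches the paper's proof. The paper simply notes that direct estimation via Equation~\ref{eq:fz} is exactly Approach 1 with an intercept-only design, so equivalence with Approaches 2 and 3 follows from Theorem~\ref{thm:estimation}; your explicit point-estimate and HC0 computation, reading the meat's $\bm{z}_i\bm{z}_i^T$ as $\bm{w}_i\bm{w}_i^T$, fills in that one-line verification.
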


\emph{Proof:} Direction estimation using the influence functions (i.e., Equation~\ref{eq:influ_est}) exactly corresponds to Approach 1 in Theorem~\ref{sec:theory}. \qed

All quantities in Table~\ref{tab:QOI} can be computed using this result. Consider two examples:

\textbf{Example 1 (Average Marginal Indirect Effect)}: $\hat{\delta}(1) = \hat{\alpha}(1,1)-\hat{\alpha}(1,0)$ can be estimated using the following linear regression
\begin{equation*}
\hat{\delta}(1) = \hat{\delta}_1^{\texttt{OLS}} \quad \hat{\delta}^{\texttt{OLS}}_0, \hat{\delta}^{\texttt{OLS}}_1 = \argmin_{d_0, d_1} \sum_{i=1}^N \sum_{t' \in \{0, 1\}} \left(\psi_i(1,t') - d_0 - d_1 \I{t' = 1}  \right)^2.
\end{equation*}

The proof follows by application of Corollary~\ref{coro:est_avg} with $\bm{z}$ that has a `1` in the position of $\psi_i(1,1)$ and a `-1` in the position $\psi_i(1,0)$. Note that Approach 3, excluding the columns of $\bm{Y}$ where $\bm{z}$ equals zero is equivalent to a linear regression where the linear predictor is parameterized as $\beta_0 \I{t=t,t'=0} + \beta_1\I{t=t,t'=1}$. Then, $\bm{z}^T\hat{\bm{\beta}} = \hat{\beta}_1 - \hat{\beta}_0$. In this example, the multiplication by $\bm{z}$ can be achieved instead by re-parameterizing as in the above equation and noting that $\hat{\delta}^{\texttt{OLS}}_1 = \bm{z}^T \hat{\bm{\beta}}$.

\textbf{Example 2 (Marginal Average Marginal Indirect Effect):} The paper defines $\bar{\delta}$ as $1/|\mathcal{T}|\sum_{t\in \mathcal{T}} \delta(t)$ (Table~\ref{tab:QOI}) and reports this in Section~\ref{sec:empirical}. This can be estimated by the specific choice of $\bm{z}$ noted before. However, Section~\ref{sec:estimation} states it can be estimated by the following linear regression:
\begin{equation}\label{eq:app_example_2}
\hat{\bar{\delta}} = \hat{\bar{\delta}}_1^{\texttt{OLS}} \quad \hat{\bar{\delta}}^{\texttt{OLS}}_0,  \hat{\bar{\delta}}^{\texttt{OLS}}_1 = \argmin_{\bar{d}_0, \bar{d}_1} \sum_{i=1}^N \sum_{t \in \{0, 1\}} \sum_{t' \in \{0, 1\}} \left(\psi_i(1,t') - \bar{d}_0 - \bar{d}_1 \I{t' = 1}  \right)^2.
\end{equation}

This again follows from Corollary~\ref{coro:est_avg} and the assumption of a balanced design. To establish this, we use Lemma~\ref{lemma:balanced}, below:

\begin{lemma}\label{lemma:balanced}
    Assume a dataset with $N$ units, $i \in \{1, \cdots N\}$ and two factors $g \in \{1,\cdots, G\}$ and $f \in \{1, \cdots, F\}$. For each unit $i$, we observe outcomes $\bm{y}_i$ that contains one observation for each $(g,f)$ combination. Consider two linear regressions:
    \begin{itemize}
        \item Approach A: Predict $\bm{y}_i$ using a design matrix $\bXi = \left(\bm{I}_G \otimes \bm{1}_F\right)$ that contains one-hot encoding for each level of $g$ corresponding to the elements of $\bm{y}_i$. Stacking the designs and data into $\bm{X}$ and $\bm{y}$, define $\hat{\bm{\beta}}_A = \left(\bm{X}^T\bm{X}\right)^{-1}\bm{X}^T\bm{y}$.
        \item Approach B: Predict $\bm{y}_i$ using a fully saturated model where a design matrix $\bm{W}_i = \bm{I}_{FG}$ where each row corresponds to a one-hot encoding for each combination of $(g,f)$. Define $\hat{\bm{\gamma}} = \left(\bm{W}^T\bm{W}\right)^{-1} \bm{W}^T\bm{y}$ where $\bm{W}$ and $\bm{y}$ are stacked across units $i$. Define $\hat{\bm{\beta}}_B$ as the averaged values of $\hat{\bm{\gamma}}$ within the corresponding strata, i.e.  $[\hat{\bm{\beta}}_B]_g = \frac{1}{F}\sum_{f=1}^F \hat{\bm{\gamma}}_{f,g}$. Define $\bm{M}$ as a $G \times FG$ matrix that does this, i.e. $\hat{\bm{\beta}}_B = \bm{M} \hat{\bm{\gamma}}$ where $\bm{M} = \left(\bm{I}_G \otimes \frac{1}{F} \bm{1}_F^T\right)$.
    \end{itemize}

    Then, $\hat{\bm{\beta}}_A= \hat{\bm{\beta}}_B$. If HC0 cluster-robust standard errors on each person $i$ are used to obtain $\hat{\bm{\beta}}_A$ and $\hat{\bm{\gamma}}$, then $\widehat{\mathrm{Cov}}(\hat{\bm{\beta}}_A) = \widehat{\mathrm{Cov}}(\hat{\bm{\beta}}_B)$.
\end{lemma}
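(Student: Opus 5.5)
The plan is to show that the Approach A design is a linear reparameterization, up to projection, of the saturated design. We then verify the point estimates and the sandwich pieces separately. Write $\bm{X}_i = \bm{W}_i \bm{C}$ with $\bm{C} = \bm{I}_G \otimes \bm{1}_F$, an $FG \times G$ matrix, and $\bm{W}_i = \bm{I}_{FG}$. Stacking over the $N$ units gives $\bm{W}^T\bm{W} = N\bm{I}_{FG}$. Hence $\hat{\bm{\gamma}} = \frac{1}{N}\sum_i \bm{y}_i = \bar{\bm{y}}$, the cellwise mean; this is where balance (each unit observing every $(g,f)$ cell exactly once) enters. Likewise $\bm{X}^T\bm{X} = N\bm{C}^T\bm{C} = NF\bm{I}_G$, so
\[
\hat{\bm{\beta}}_A = \frac{1}{NF}\sum_i \bm{C}^T\bm{y}_i = \frac{1}{F}\bm{C}^T\bar{\bm{y}}.
\]
Since $\bm{M} = \frac{1}{F}\bm{C}^T$ by the Kronecker definition, this equals $\bm{M}\hat{\bm{\gamma}} = \hat{\bm{\beta}}_B$. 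This step is routine.

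For the variances, I would write both cluster-robust HC0 sandwiches explicitly. For the saturated model the bread is $(\frac{1}{N}\bm{W}^T\bm{W})^{-1} = \bm{I}_{FG}$, and the meat is $\frac{1}{N}\sum_i \hat{\bm{v}}_i\hat{\bm{v}}_i^T$ with residuals $\hat{\bm{v}}_i = \bm{y}_i - \bar{\bm{y}}$. Hence
\[
\widehat{\mathrm{Cov}}(\hat{\bm{\beta}}_B) = \bm{M}\left(\frac{1}{N}\sum_i \hat{\bm{v}}_i\hat{\bm{v}}_i^T\right)\bm{M}^T,
\]
the delta-method (linear) transform of the saturated covariance. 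For Approach A the bread is $(F\bm{I}_G)^{-1}$. The meat is $\frac{1}{N}\sum_i \bm{C}^T\hat{\bm{u}}_i\hat{\bm{u}}_i^T\bm{C}$ with $\hat{\bm{u}}_i = \bm{y}_i - \bm{C}\hat{\bm{\beta}}_A$, so
\[
\widehat{\mathrm{Cov}}(\hat{\bm{\beta}}_A) = \frac{1}{N}\sum_i (\bm{M}\hat{\bm{u}}_i)(\bm{M}\hat{\bm{u}}_i)^T.
\]

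The key step, and the only real obstacle, is that the residuals differ between the two models: $\hat{\bm{u}}_i \neq \hat{\bm{v}}_i$ in general, because Approach A does not fit each cell mean. I would handle this by computing
\[
\bm{M}\hat{\bm{u}}_i = \bm{M}\bm{y}_i - \bm{M}\bm{C}\hat{\bm{\beta}}_A = \bm{M}\bm{y}_i - \hat{\bm{\beta}}_A,
\]
using $\bm{M}\bm{C} = \frac{1}{F}\bm{C}^T\bm{C} = \bm{I}_G$. The other residual gives $\bm{M}\hat{\bm{v}}_i = \bm{M}\bm{y}_i - \bm{M}\bar{\bm{y}} = \bm{M}\bm{y}_i - \hat{\bm{\beta}}_B$. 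By the first step these are identical, so the two meats coincide after premultiplication by $\bm{M}$. The residual discrepancy $\bm{C}\hat{\bm{\beta}}_A - \bar{\bm{y}}$ lies in the null space of $\bm{M}$, which is exactly the within-$g$ deviations that are averaged away. Substituting gives $\widehat{\mathrm{Cov}}(\hat{\bm{\beta}}_A) = \widehat{\mathrm{Cov}}(\hat{\bm{\beta}}_B)$.

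I would close by remarking that the argument uses only balance (identical per-unit design $\bm{I}_{FG}$) and clustering by $i$. Example 2 then follows by taking $g$ to index $t'$ and $f$ to index $t$, and combining with Corollary~\ref{coro:est_avg}.
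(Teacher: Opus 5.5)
Your proof is correct and follows essentially the same route as the paper's. Both write the two cluster-robust sandwiches explicitly and use $\bm{M} = \frac{1}{F}\bm{X}_i^T$ together with $\bm{X}_i^T\bm{X}_i = F\bm{I}_G$ to reduce each meat to the projected residuals $\bm{M}\bm{y}_i - \hat{\bm{\beta}}$, then conclude from $\hat{\bm{\beta}}_A = \hat{\bm{\beta}}_B$; you additionally make the point-estimate equality explicit via $\hat{\bm{\gamma}} = \bar{\bm{y}}$, which the paper leaves as book-keeping.
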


\emph{Proof:} The equivalence of the point estimates $\hat{\bm{\beta}}_A = \hat{\bm{\beta}}_B$ follows from elementary book-keeping and the balanced design. The equivalence of the estimated covariance matrix starts by noting
$$
\widehat{\mathrm{Cov}}(\hat{\bm{\beta}}_A) = \left(\frac{NF}{N}~\bm{I}_{G}\right)^{-1} \left(\sum_{i=1}^{N} \bXi^T \left(\bm{y}_i - \bXi\hat{\bm{\beta}}_A\right)\left(\bm{y}_i - \bXi\hat{\bm{\beta}}_A\right)^T\bXi\right) \left(\frac{NF}{N}~\bm{I}_G\right)^{-1}.
$$

Further, for $\hat{\bm{\gamma}}$,
$$
\widehat{\mathrm{Cov}}(\hat{\bm{\gamma}})=\left(\frac{N}{N}~\bm{I}_{FG}\right)^{-1}\left(\sum_{i=1}^N \left(\bm{y}_i - \hat{\bm{\gamma}}\right)\left(\bm{y}_i - \hat{\bm{\gamma}}\right)^T\right)\left(\frac{N}{N}\bm{I}_{FG}\right)^{-1}
$$

For $\hat{\bm{\beta}}_B$, $\widehat{\mathrm{Cov}}(\hat{\bm{\beta}}_B) = \bm{M}\widehat{\mathrm{Cov}}(\hat{\bm{\gamma}})\bm{M}^T$ and note that $\bm{M} = \frac{1}{F} \bXi^T$, thus
\begin{align*}
\widehat{\mathrm{Cov}}(\hat{\bm{\beta}}_B) &= \frac{1}{F^2} \left(\sum_{i=1}^N \bXi^T \left(\bm{y}_i - \hat{\bm{\gamma}}\right)\left(\bm{y}_i - \hat{\bm{\gamma}}\right)^T \bXi\right).
\end{align*}

To show that $\widehat{\mathrm{Cov}}(\hat{\bm{\beta}}_A)=\widehat{\mathrm{Cov}}(\hat{\bm{\beta}}_B)$, note that $\bXi^T\bXi \hat{\bm{\beta}}_A = F \hat{\bm{\beta}}_A$ and $\bXi^T\hat{\bm{\gamma}} = F \hat{\bm{\beta}}_B$. As $\hat{\bm{\beta}}_A=\hat{\bm{\beta}}_B$, these coincide as do the estimates of the covariance matrices. \qed

To establish Example 2, by Lemma~\ref{lemma:balanced}, the regression estimates (and standard errors) from Equation~\ref{eq:app_example_2} are equivalent to estimating the model with all interactions and then averaging the relevant coefficients (i.e., Approach 3 in Theorem~\ref{thm:estimation}). Thus, by Collorary~\ref{coro:est_avg}, Equation~\ref{eq:app_example_2} is equivalent to direct estimation using Approach 1. Simple argumentation also shows that these results apply when $t$ has more than two levels.

\subsection{Heterogeneous Effects}\label{app:deriv_heteff}

Section~\ref{sec:heteff} notes that the best linear approximation to the conditional expectation function can be found using linear regression. To establish this, we define the conditional expectation effect for any $\bm{z}$ as    $F(\bm{z};\bx) = \E[\bm{z}^T\bm{\psi}_i | \bXi = \bx]$. Assume there is a low-dimensional set of covariates $\bm{w}_i$---that we assume to be a subset or other deterministic transformation of $\bm{x}_i$---for which we wish to obtain the best linear approximation to $F(\bm{z}; \bx)$. Following results from \cite{semenova2021debiased}, this can be estimated by a linear regression to predict $\bm{z}^T\hat{\bm{\psi}}_i$ using $\bm{w}_i$.\footnote{It can also be shown, somewhat tediously, that an identical estimator can be obtained using results from \cite{kennedy2024semiparametric} to find the influence function of the corresponding population linear regression coefficient of the conditional effect function as the outcome and $\bm{w}_i$ as the predictors.} In simple settings, e.g., with a binary $W_i$, this collapses to the difference in average $\bm{z}^T\hat{\bm{\psi}}_i$ when $W_i = 1$ from when $W_i = 0$.

The estimator that we use, $\hat{\bm{\beta}}_{\bm{z}}$, is defined as shown below, noting again that this is a consistent and asymptotically normal estimator for $\bm{\beta}_{\bm{z}}$, where $\hat{\bm{\Psi}} \in \mathbb{R}^{N \times q}$ is a matrix of stacked $\hat{\bm{\psi}}_i^T$:\footnote{Unlike \cite{semenova2021debiased}, we fix $\bm{w}$ and thus this will not recover the \emph{true} conditional expectation function asymptotically, rather, we focus on obtaining the best linear approximation for easy interpretability. This also requires milder assumptions than in their paper.}

\begin{equation}
\begin{aligned}
\hat{\bm{\beta}}_{\bm{z}} &= \left(\bm{W}^T\bm{W}\right)^{-1}\bm{W}^T\hat{\bm{\Psi}} \bm{z}; \\
\widehat{\mathrm{Var}}\left(\hat{\bm{\beta}}_{\bm{z}}\right) &= \left(\frac{1}{N}~ \bm{W}^T\bm{W}\right)^{-1} \left(\frac{1}{N}\sum_{i=1}^N (\bm{z}^T\hat{\psi}_i - \hat{\bm{\beta}}_{\bm{z}}^T\bm{w}_i)^2 \cdot \bm{w}_i\bm{w}_i^T\right)\left(\frac{1}{N}~\bm{W}^T\bm{W}\right)^{-1} \\
\sqrt{N}\left(\hat{\bm{\beta}}_{\bm{z}} - \bm{\beta}_{\bm{z}}\right) &\to^d N\left(\bm{0},\widehat{\mathrm{Var}}(\hat{\bm{\beta}}_{\bm{z}})\right)
\end{aligned}
\end{equation}

Theorem~\ref{thm:estimation} applies here immediately and thus provides three equivalent ways of estimating heterogeneous effects. The conditional effect function for all mediation quantities of interest can be obtained for specific choice of $\bm{z}$ as above. To consider one specific example,

\textbf{Example 3 (Heterogeneous Indirect Effects):} Equation~\ref{eq:app_het} states that heterogeneous effects for a binary moderator $W_i \in \{0,1\}$ can be obtained using the following regression

\begin{equation}\label{eq:app_het}
\begin{split}
&(\hat{\delta}_0,\hat{\delta}_W, \hat{\delta}_{1}, \hat{\delta}_{1:W})\\
&= \argmin_{d_0, d_W, d_1, d_{1:W}} \sum_{i=1}^N \sum_{t' \in \{0,1\}} \left(\begin{split}&\hat{\psi}_i(1,t') - d_0 - d_W \I{W_i = 1} - d_1 \I{t' = 1} \\ &-d_{1W} \I{t'=1,W_i=1}\end{split}\right)^2
\end{split}
\end{equation}

in which the interaction term $\hat{\delta}_{1:W}$ represents the difference in average marginal indirect effects when $W_i = 1$ versus $W_i = 0$. The proof is as follows: Theorem~\ref{thm:estimation} states three equivalent ways of estimating $\delta(1;1)-\delta(1;0)$. Approach 1 in this instance provides a single regression estimator that is equivalent to a single regression with linear predictor as follows, noting that since the sum in Equation~\ref{eq:app_het} runs only over those $\hat{\psi}_i(t,t')$ where $t = 1$, we can exclude all other elements of $\bm{z}$ from the regression:
\begin{equation*}
\begin{split}
&\I{t=1,t'=0} \gamma_1 + \I{t=1,t'=1} \gamma_2 + \\
&\I{t=1,t'=0,W_i=1} \gamma_3 + \I{t=1,t'=1,W_i=1} \gamma_4.\end{split}\end{equation*}
This can be re-parameterized in terms of differences, i.e. where $d_0 = \gamma_1$, $d_W = \gamma_3 - \gamma_1$, $d_1 = \gamma_2 - \gamma_1$, $d_{1:W} = (\gamma_4 - \gamma_3) - (\gamma_2 - \gamma_1)$. Thus, Theorem~\ref{thm:estimation} guarantees the equivalence between Equation~\ref{eq:app_het} and the direct estimator. Tedious, if straightforward, algebra shows that a similar result applies to quantities such as heterogeneous averaged indirect effects, i.e. the heterogeneous effects extension of Example 2 above.

\subsection{Repeated Observations}

As Section~\ref{sec:extensions} notes, one often has individual $i$ perform multiple comparison tasks. Using the linear regression formulation for estimation in Theorems 1 and 2, no adjustment is needed; clustered standard errors for individual $i$ will account for dependence across repeated observations and different $(t,t')$-combinations.

\section{Sensitivity Analysis}\label{app:sensitivity}

Our method depends on two critical, untestable, assumptions: Principal ignorability (Assumption~\ref{eq:A_PI}) and manipulation exclusion restriction (Assumption~\ref{eq:A_ExcluRes}). Given those assumptions plus others that can be guaranteed to hold by design---i.e., random assignment, Equation~\ref{eq:app_med_form} (repeating Equation~\ref{eq:mediation_fmla} from the main text) identifies our key quantity $\alpha(t,t') = \E[Y_i(t,M_i(t',\bSi),\bSi)]$ as follows
\begin{equation}\label{eq:app_med_form}
\alpha(t,t') =  \sum_{x,\bs} \left[\sum_{m} \mu_Y(t,m,\bx, \bs) e_m(t', \bs,\bx)\right]\mathrm{Pr}(X_i = x, S_i= s),    
\end{equation}

where $\mu_Y(t,m,\bx, \bs) = \E[Y_i | T_i = t, M_i = m, \bXi = \bx, \bSi = \bs, A_i = 1]$ and $e_m(t',\bs,\bx) = \mathrm{Pr}(M_i = m | T_i = t', \bSi = \bs, \bXi = \bx, A_i = 0)$.

This section derives a sensitivity test to examine the robustness of our results to violations of the key untestable assumptions. It also includes some additional results from applying this to our empirical analysis.

\subsection{Sensitivity Analysis in Mediation}

We follow the logic of existing sensitivity analyses in semi-parametric settings (e.g., \citealt{robins2000sensitivity,tchetgen2012semiparametric}) by supposing the existence of a function $\gamma(t,t',m,\bx,\bs)$ that captures the magnitude of the bias that is driven by (i) differences in average potential outcomes between principal strata and (ii) differences in average potential outcomes due to direct manipulation of the mediator. Formally,
\begin{equation}
    \begin{split}
    \gamma(t,t', m, \bx, \bs) = ~&\E[Y_i(t,m) | \I{M_i(t) = m}, X_i = x, A_i = 1]  - \\ 
    &\E[Y_i(t,m) | \I{M_i(t') = m}, X_i = x] 
    \end{split}
\end{equation}

Simple algebraic re-arrangement shows that the following decomposition holds, \emph{without} assuming principal ignorability or a manipulation exclusion restriction. The only non-verifiable assumption is that $e_m(t',\bx,\bs) = \mathrm{Pr}(M_i = m|T_i = t, \bXi = \bx, \bSi = \bs)$; this holds if $M_i(t,\bs,a) = M_i(t,\bs,a')$ and subjects are randomized into experiments.

$$\E[Y_i(t,t')] = \alpha(t,t') - \sum_{\bx,\bs} \left[\sum_{m} \gamma(t,t',m,\bx, \bs)e_m(t',\bx,\bs)\right]\mathrm{Pr}(\bXi = \bx, \bSi = \bs)$$

If we assumed some specific functional form on $\gamma(\cdot)$, we could assess sensitivity. It is, however, difficult to think through a plausible functional form. To address this, we leverage the (optional) $Y(T)$-experiment where $T_i$ is manipulated but $M_i$ takes on its natural value to provide some information on $\gamma(\cdot)$. In this experiment given the randomization of $T_i$, we know that $\E[Y_i(t,t)]$ is identified by
\begin{equation}
\begin{split}
\phi(t) &= \sum_{\bx,\bs} \mu_Y^0(t, \bx, \bs)\mathrm{Pr}(X_i = x, \bSi = s) \\
\mu_Y^0(t,\bx,\bs) &= \E[Y_i | T_i = t, \bXi = \bx,\bSi = \bs, A_i = *]   
\end{split}
\end{equation}

If all of our assumptions hold, then $\phi(t) = \alpha(t,t)$ as well as the conditional expectation holding for all $(\bXi,\bSi)$ strata. This suggests a way to find an approximation for $\gamma(\cdot)$: Find the function that minimizes the expected squared difference between the conditional expectation from the $Y(T)$-experiment and the one that results from the mediation formula, summing over all values of $t$. 

\begin{equation}
    \begin{split}
    &\tilde{\gamma}(t,t,m,\bx,\bs) = \argmin_{\gamma} \sum_{t,\bx,\bs} \left[\begin{split}&\E[Y_i|T_i = t, \bXi = \bx, \bSi = \bs, A_i = *]\\
    &- \sum_m \mu_Y(t,m,\bx, \bs)e_m(t,\bx,\bs) \\
    &- \sum_m \gamma(t,t,m,\bx,\bs)\end{split}\right]^2  \mathrm{Pr}\left(\begin{split}&T_i = t, \\ &\bXi = \bx, \\ &\bSi = \bs\end{split}\right) 
    \end{split}
\end{equation}

With the data provided by the $Y(T)$-experiment, however, we can only estimate $\gamma(\cdot)$ when $t=t'$, as the $Y(T)$-experiment gives no information about nested counterfactuals. To address this and make predictions for $\gamma(\cdot)$ when $t \neq t'$, we assume that it is symmetric in its arguments, i.e. $\gamma(t,t',m,\bx,\bs)=\gamma(t',t,m,\bx,\bs)$. Further, instead of the challenging task of estimating the best $\gamma(\cdot)$, we focus merely on a linear approximation in $M$, $X$ and $S$ where, additionally, we include interactions with $t$ and $t'$, i.e.

\begin{equation}
    \begin{split}
    \gamma(\cdot) &= \sum_{u \in \mathcal{T}} \left(\I{u = t} + \I{u = t'}\right) \alpha_u + \bx^T\bm{\beta}_X + \bs^T\bm{\beta}_S + \sum_{k} \I{k = m} \beta_m + \\
    & \sum_{u \in \mathcal{T}} \left(\I{u = t} + \I{u = t'}\right) \left[\bx^T \bm{\phi}^{(u)}_{X} + \bs^T \bm{\phi}^{(u)}_{S} + \sum_{k} \I{k=m} \phi^{(u)}_{m}\right] = \bm{w}^T \bm{\theta}
    \end{split}
\end{equation}

Thus, we can find an approximation to $\gamma(\cdot)$ by solving the implied linear system above.

To do so, we estimate the high-dimensional functions $\mu_Y(\cdot), e_m(\cdot), \E[Y_i | T_i = t, \bXi = \bx, \bSi = \bs, A_i = *]$ using cross-fitting. That is, we split the data into $K$ folds, estimate the model on all but one fold, and plug-in the out of sample predictions on the observed $(T_i, \bXi, \bSi)$ pairs. We then estimate $\hat{\bm{\theta}}^*$ to define the bias function $\hat{\gamma}(\cdot)$. Future work might look into estimating this in a doubly robust fashion.

\subsection{Application of Sensitivity Function}

Section~\ref{sec:partial_test} applies this estimated sensitivity function to the experiment and shows how this changes the estimated effects. Figure~\ref{fig:sens_mm} complements this analysis by showing how applying the sensitivity analysis changes the estimates of $\E[Y_i(t,M_i(t,\bSi),\bSi)] = \alpha(t,t)$. As expected, even though we use a simple linear approximation to estimate $\gamma(\cdot)$, this adjustment resolves many of the observed discrepancies between the estimates coming from the mediation formula and the $Y(T)$-experiment itself---especially when race is used as the primary treatment $T$. The linear approximation is perhaps too crude for political experience insofar as it \emph{over-corrects} beyond the means observed in the $Y(T)$-study.

\begin{figure}
    \caption{Sensitivity of Marginal Means}\label{fig:sens_mm}
    \includegraphics[width=\textwidth]{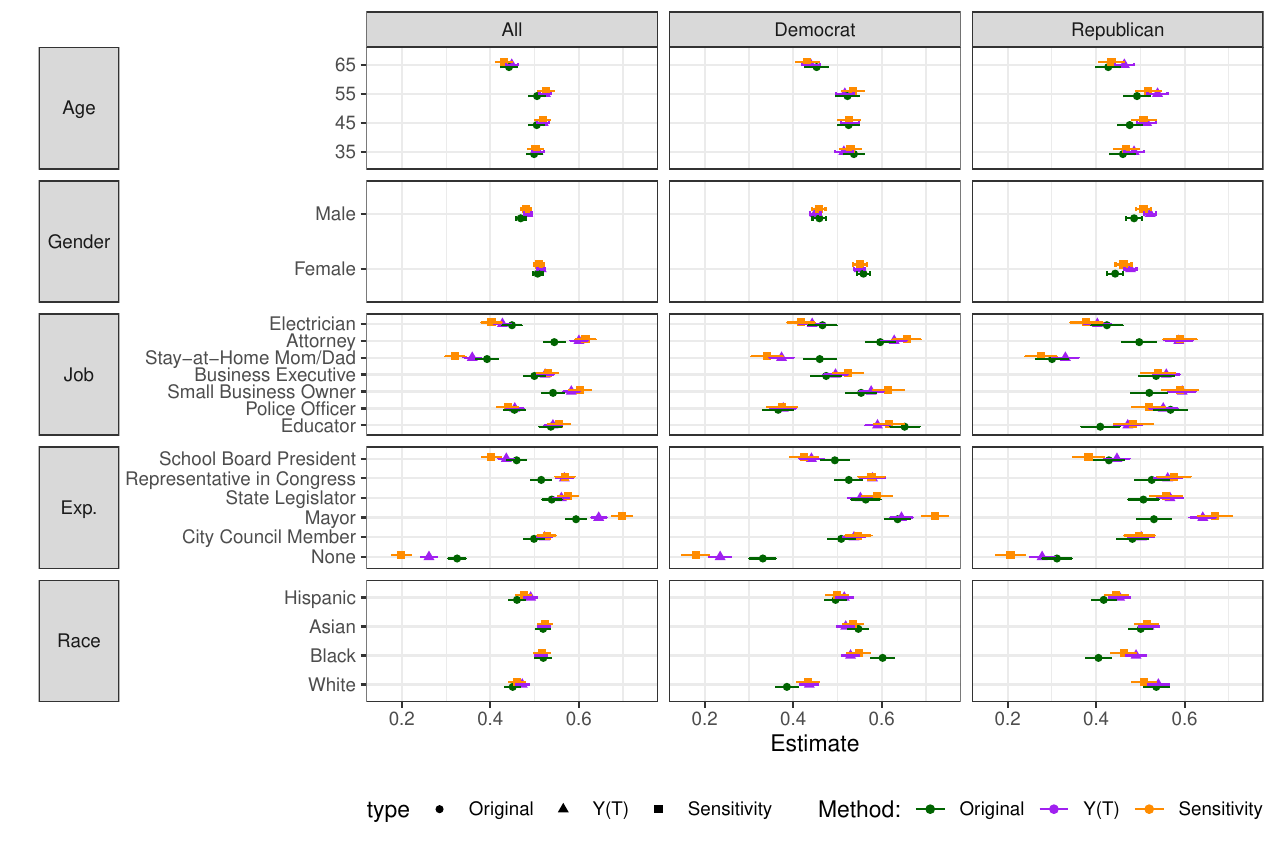}
\end{figure}

\section{Additional Empirical Results}\label{app:other_empirical}

This section contains other empirical results including the estimated eliminated effects from our analyses, full results for all respondents and each partisan sub-group, and results for other estimands.

\subsection{Eliminated Effects}\label{app:emp_eliminated}

Figures~\ref{app:repl_dem} and~\ref{app:repl_rep} show the estimated effects from the $Y(T)$ and $Y(T,M)$-experiments, i.e., the AMCEs, as well as the difference between them, i.e., the eliminated effect \citep{acharya_2018_polan}. These are all identified without requiring principal ignorability, although the eliminated effect does require the manipulation exclusion restriction. To benchmark our results, we also present the corresponding estimates from the data used in \cite{kirkland_candidate_2018}'s two studies---one on MTurk and one on YouGov. Our study is denoted as ``Prolific.''

\begin{figure}
\caption{Democratic Respondents: $Y(T)$, $Y(T,M)$ and Eliminated Effects}\label{app:repl_dem}
\includegraphics[width=\textwidth]{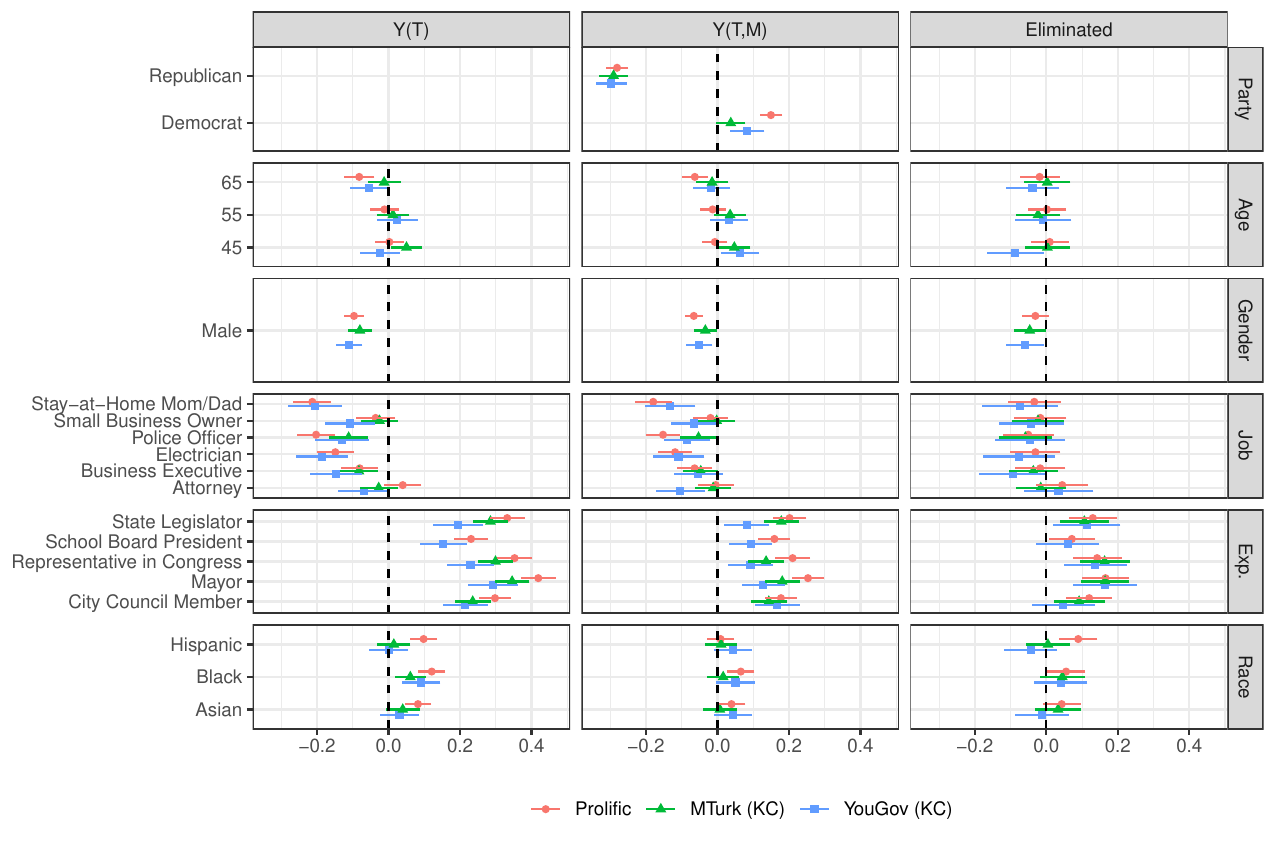}
\end{figure}

\begin{figure}
\caption{Republican Respondents: $Y(T)$, $Y(T,M)$ and Eliminated Effects}\label{app:repl_rep}
\includegraphics[width=\textwidth]{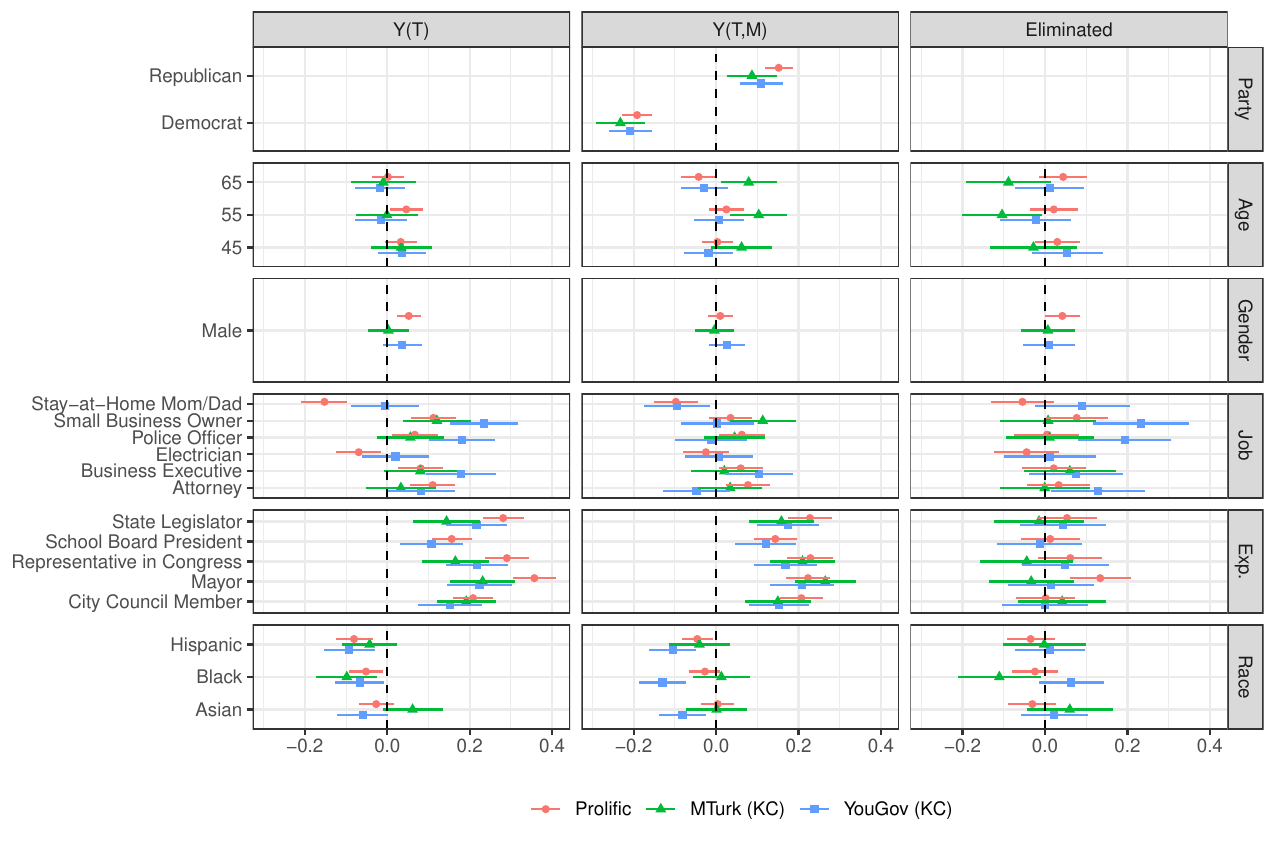}
\end{figure}

Across the studies, the results are generally similar. The correlation of the point estimates in both the $Y(T,M)$-studies and $Y(T)$-studies (across all effects and partisan sub-groups) across Prolific estimates and the YouGov and MTurk studies is above 0.90. More differences exist in the eliminated effects, although the estimates from our study are as correlated with MTurk (0.59) and YouGov (0.49) as those two studies are with each other (0.45).

Focusing on the eliminated effect, we note that our study (``Prolific'') finds some statistically significant evidence for their existence. Focusing on Black vs. white candidates, the eliminated effect is estimated to be 0.055 ($p$-value of 0.04) for Democratic respondents and -0.024 ($p$-value of 0.39) for Republican respondents. By contrast, the mediation analysis provides clear evidence---even once the sensitivity analysis is included---of indirect effects. If desired, our framework could allow one to estimate the four-fold decomposition by \cite{vanderweele_2014_epidem} and interpret this reference interaction directly.

\subsection{Full Sample and Partisan Sub-Group Results}\label{app:full_results}

For completeness, we replicate the main results (effects from the $M(T)$-experiment and showing the MAMIE, MAMDE, and AMCE) by all respondents and three partisan categories (Democrat, Republican, and Independent). Independents are more noisily estimated, as there are fewer of them in the sample, but are generally located between the Democratic and Republican respondents. The results from the $M(T)$-experiment are shown first. There is limited evidence of heterogeneous effect by the party of the respondent.

\begin{figure}[!htbp]
    \caption{Estimated Effects from $M(T)$-Experiment}
    \includegraphics[width=\textwidth]{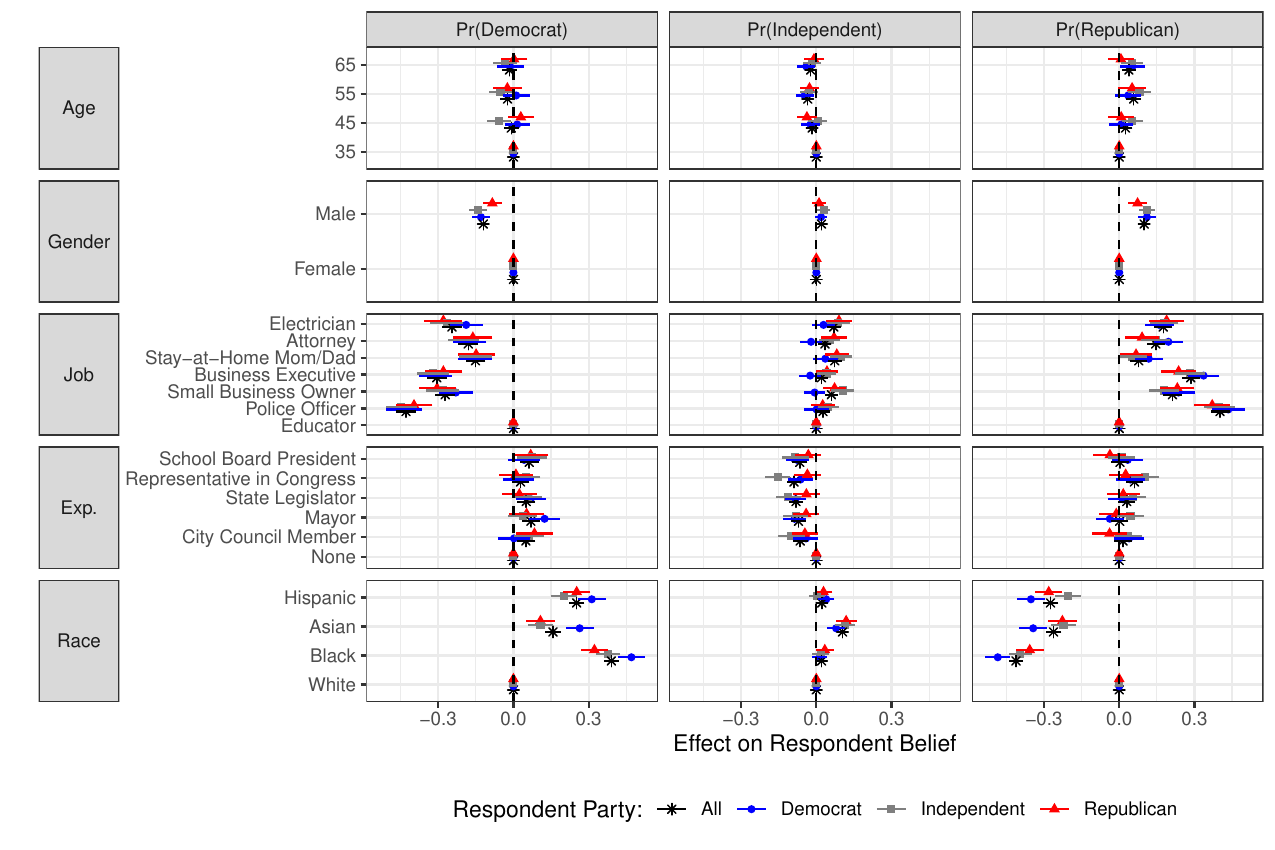}
\end{figure}

Next, Figure~\ref{app:main_effects_robust} replicates the main figure estimating the mediation effects. The interpretation of the third pre-registered hypothesis (indirect effect of political experience for all respondents) is as follows: As in the main text, the bulk of the effect is through a direct effect. The indirect effects are statistically significant for all levels except Mayor ($p$-values from 0.0005 to 0.005) but quite small in magnitude.

\begin{figure}
    \caption{Estimated Mediation Effects}\label{app:main_effects_robust}
    \includegraphics[width=\textwidth]{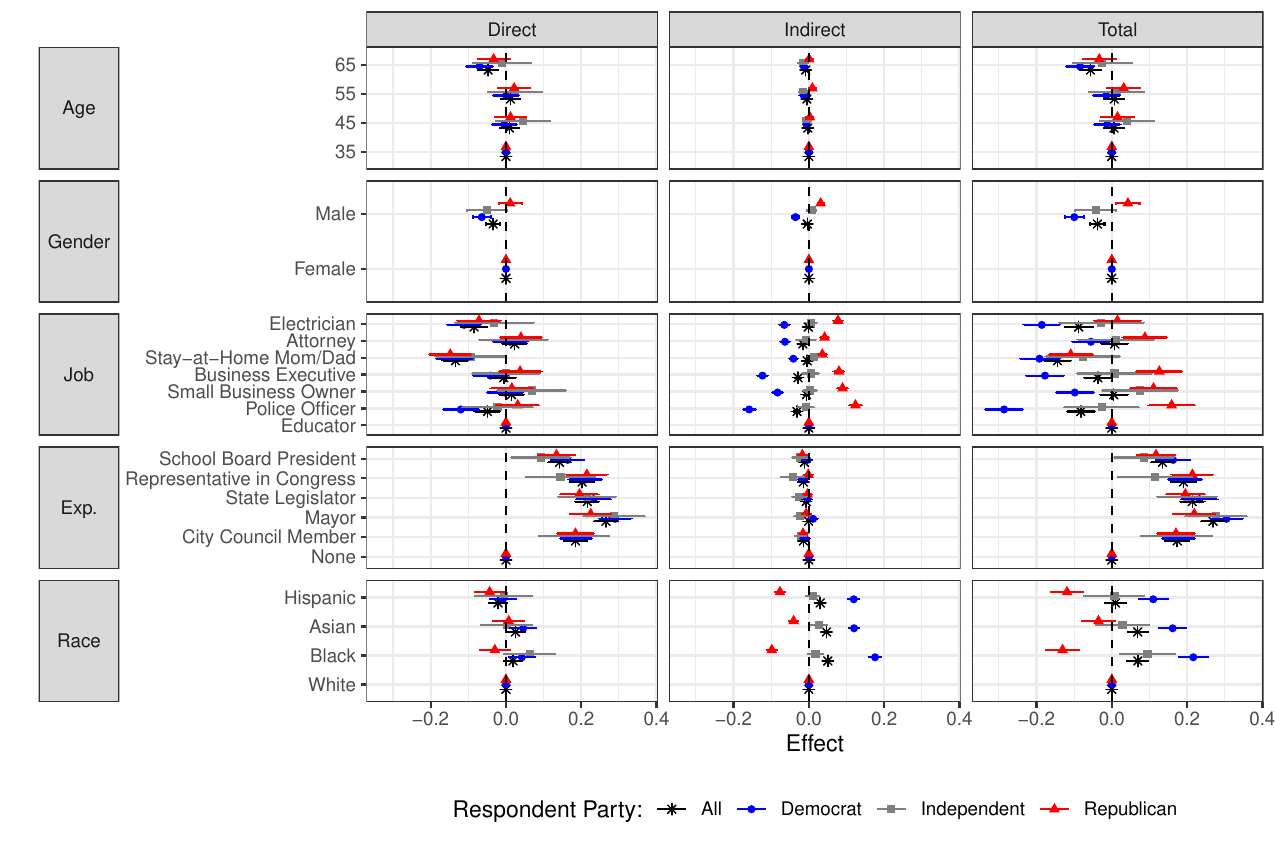}
\end{figure}

\subsection{Average Marginal Direct and Indirect Effects}\label{app:extra_a_effects}

As noted in the main text, for a treatment with levels $\mathcal{T}$, Table~\ref{tab:QOI} notes that there are average marginal direct (AMDE) and indirect effects (AMIE) for each level $t \in \mathcal{T}$. This is a large number of quantities to examine, so we prefer to report the average of these, i.e., the marginalized average marginal direct and indirect effects (i.e., MAMDE and MAMIE). Figure~\ref{fig:app_extra_A} reports each of these for Democratic and Republican respondents, in light blue or light red, where the corresponding MAMIE and MAMDE from the main text are overlaid in dark blue and dark red, respectively. There is some variation (e.g., for indirect effects for Job), but the broad story is identical to the main text. 

\begin{figure}[!htbp]
    \caption{AMIE and AMDE by Partisanship of Respondent}\label{fig:app_extra_A}
    \includegraphics[width=\textwidth]{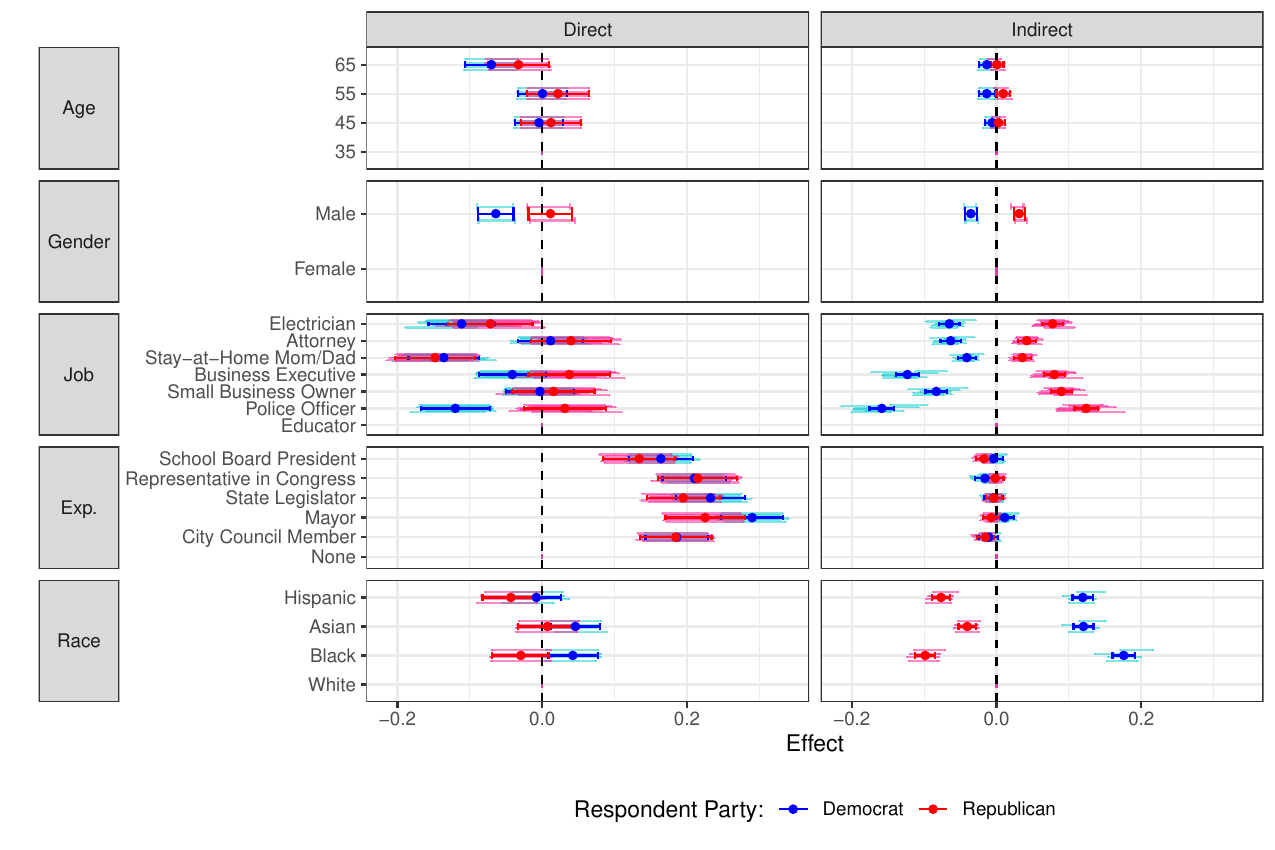}
\end{figure}

\section{Additional Information on the Survey}\label{app:survey_details}

We include here additional information on the survey itself and recruitment of respondents. Appendix~\ref{app:survey_instrument} contains a link to the full survey instrument. 

\subsection{Recruitment of Respondents}\label{app:survey_demo}

Our pre-registration plan on OSF (available \chref{\urlprereg}{here}) said we would recruit 4,500 respondents. We fielded the survey from August 29, 2025-October 19, 2025 and recruited 4,549 respondents, with at least 1,500 in each experiment. All respondents were above 18 and located in the United States. Per our IRB and Prolific's guidelines, we paid each respondent \$1.50 based on an estimated completion time of 7.5 minutes. In fact, the median respondent took 3.7 minutes with a median time in the $Y(T,M)$ and $Y(T)$-experiments of slightly over four minutes and the $M(T)$-experiment taking around 2.7 minutes.

To obtain a reasonable number of Republican respondents, we used Prolific's quotas to recruit a more balanced sample of partisans (29.5\% Democrat; 27.9\% Republican; 42.6\% Independent), with quotas also for age and sex. However, the self-reported partisanship of the respondents differ from the information in Prolific's quota when measured using the standard three-point partisanship scale (e.g., 47.2\% of respondents self-identify as Democrats). The following table compares the quota to self-reported partisanship used in the main text, where independent ``leaners'' are grouped with their corresponding party.

\begin{table}[!htbp]
    \caption{Partisanship Distribution: Self-Reported vs. Prolific}
    \centering
    \begin{tabular}{llllll}
    \hline\hline
    && \multicolumn{3}{c}{Self-Reported Party ID}  & \\
    \hline
    & & Democrat & Independent & Republican & Margin  \\
    \input{figures/tab_quota.tex} \\ \hline\hline
    \end{tabular}
\end{table}

The pre-registration plan lists the pre-treatment covariates we measured including age, gender, education, income, ethnicity, Hispanic identity, interest in politics, ideology, partisanship (measured on both a seven-point and three-point scale). These were measured to mirror either the coding in \cite{kirkland_candidate_2018} or standard question coding for political surveys in the United States (see Appendix~\ref{app:survey_instrument}) for the full instrument. In both the respondent pre-treatment covariates and the experimental responses, there was very little missing data. 

\subsection{Allocation of Respondents to Experimental Arms}

Each respondent was randomly assigned with 1/3 probability into each of the three experiments, i.e., $Y(T)$, $Y(T,M)$, and $M(T)$. In each experiment, each respondent performed five tasks described in the main text. In the $Y(T)$ and $Y(T,M)$-experiments, these were forced choice conjoints (i.e., comparing two candidates). In the $M(T)$-experiment, they provided their best guess as to the party affiliation of a single candidate. As all of our pre-treatment demographic variables are categorical and there are three experimental arms, we conduct a chi-squared test for each measured pre-treatment demographic (and a recoded version if used in the regression analysis) to see if there is any evidence of imbalance. Across all of the demographic factors, the smallest $p$-value is around 0.23.

\subsection{Machine Learning Models}\label{app:survey_ML}

Our method requires estimating high-dimensional conditional expectation functions for the outcome, i.e. $\mathbb{E}[Y_i | T_i = t, \bSi = \bm{s}, \bXi = \bm{x}, M_i = m, A_i = 1]$, and the mediator, i.e., $\mathrm{Pr}(M_i = m | T_i = t, \bSi = \bm{s}, \bXi = \bm{x}, A_i = 0)$. To do this, we use the following machine learning models. Specifically, we rely on a random forest estimated with \texttt{ranger}, using a probability forest with 500 trees (the default setting). Our predictive models use the full array of our pre-treatment covariates (partisanship [7 point], ideology, gender, household income, ethnicity, education, interest in politics, age), with some recoding to collapse rare categories; Appendix~\ref{app:survey_instrument} provides the question wording.

We rely on a random forest for two reasons; first, it can handle the small amount of missing data observed. Second, it is a common option used in debiased/double machine learning because of its flexibility in learning interactions, strong out-of-sample predictive performance, and convergence rates. \cite{montgomery2018tree} provide an accessible introduction to this method for political science data. This is one option that we suggested in the pre-analysis plan. As all of our features are categorical, we also prefer \texttt{ranger} as it has the ``partition'' argument that splits factors based on the optimal partition of levels (e.g., grouping together Democratic and Independent respondents separate from Republican ones). This is important as the ``standard'' implementation is likely to work poorly with categorical inputs. Although they are known by design, to deal with possible random imbalances, we estimate the propensity scores using a simpler model---an additive multinomial or logistic regression. For $p^T_t$, we predict this using $\bSi$; for $p^M_m$, we predict this using $T_i, \bSi$; for $p_a$, we predict this using $\bXi$. For $p_a$, we use a multinomial logistic regression with random effects as some categories in $\bXi$ are rather rare in our sampled data.

We rely on five-fold cross-fitting to estimate the nuisance functions and obtain $\hat{\psi}_i(t,t')$ using only held-out data. For each fold $k$, we estimate the conditional expectation functions once and then compute all $\hat{\psi}_i(t,t)$ using those same estimated models. In our analysis presented here, we use the empirical distribution of $\bSi$ and thus rely on the influence function discussed in the main text. Appendix~\ref{app:mediation_fmla} discusses how one might weight to a different distribution following \cite{de2022improving}.

\subsection{Attention Check}

We implemented a simple attention check in the survey that asked respondents to ``please select the second option from those below'' from ``strongly disagree'', ``somewhat disagree'', ``neither agree nor disagree'', ``somewhat agree'', or ``strongly agree''. Only 1.3\% of respondents failed this attention check. As we pre-registered that we would run the analysis excluding those respondents, our replication code does so. Focusing on Figure~\ref{fig:main_results}, the estimates are virtually identical, with point estimates and standard errors correlating at 0.999 with the main results, with an average absolute difference of 0.003 between the point estimates. All patterns of statistical significance at the 0.05 level are unchanged.

\section{Survey Instrument}\label{app:survey_instrument}

An anonymized copy of the full survey instrument can be found \chref{\urlproject}{here} on OSF. The $Y(T)$ and $Y(T,M)$-experiments are standard; the demographic questions are also standard and closely follow either \cite{kirkland_candidate_2018} or standard ways of eliciting this information from American survey respondents. As it is less common, we discuss the $M(T)$-experiment in more detail. Before proceeding to the task, we tell the respondents

\begin{quote}
We are now going to present you with a series of five potential candidates for mayor. Imagine that each person was running for mayor in your city or town and that the only information available to you is what we present. You will be asked to make your best guess as whether each candidate is an Independent, a Republican or a Democrat.

There are no incorrect answers; please just answer to the best of your ability.
\end{quote}

Then, we show them a hypothetical candidate in the standard conjoint format with instructions that say ``If you had to make your best guess, do you think the candidate [above] is most likely to be an Independent, a Republican, or a Democrat?'' The above link shows how it would appear in the survey. Each respondent performs the task five times. For each respondent, we randomize the order in which the party options (Democrat, Republican, and Independent) appear to avoid order effects. We do this for both the instructions and the questions.

\end{document}

%% file: figures/tab_quota.tex
\multirow{4}{*}{\shortstack{Quota\\Party ID}} &  Democrat & 1325 & 9 & 13 & 29.6\% \\
& Independent & 803 & 688 & 438 & 42.4\% \\
& Republican & 21 & 8 & 1244 & 28\% \\
& Margin & 47.2\% & 15.5\% & 37.3\% &  